\documentclass[10.5pt,twocolumn]{article}

\usepackage[letterpaper,  margin=0.7in]{geometry}
\usepackage{url}
\usepackage{latexsym}
\usepackage{rotating}
\usepackage{graphicx}
\usepackage[breaklinks]{hyperref}  
\usepackage[hyphenbreaks]{breakurl}
\usepackage{bbm}
\usepackage[font=normalsize]{caption}

\usepackage[utf8]{inputenc}
 
\pagenumbering{arabic}

\usepackage{amsmath}
\usepackage{amsfonts}
\usepackage{amssymb}
\usepackage{amsthm}
\usepackage{lipsum}
\usepackage{subcaption}
\usepackage{multicol}
\usepackage{multirow}
\usepackage{comment}

\usepackage[justification=centering]{caption}

\makeatletter
\newcommand*{\centerfloat}{
  \parindent \z@
  \leftskip \z@ \@plus 1fil \@minus \textwidth
  \rightskip\leftskip
  \parfillskip \z@skip}
\makeatother

\usepackage{xcolor}
\hypersetup{
    colorlinks,
    linkcolor={blue!40!black},
    citecolor={blue!40!black},
    urlcolor={blue!0!black}
}

\usepackage{multirow, booktabs}
\usepackage{siunitx}
\usepackage{enumitem}
\usepackage{mathtools}
\usepackage{titlesec}
\titleformat*{\section}{\large\bfseries}
\titleformat*{\subsection}{\normalsize\bfseries}
\titleformat*{\subsubsection}{\small\bfseries}
\setlength{\columnsep}{0.55cm}

\newtheorem{prop}{Proposition}
\newtheorem*{prop*}{Proposition}
\numberwithin{prop}{section}

\numberwithin{equation}{section}

\theoremstyle{definition}
\newtheorem{definition}{Definition}[section]

\newtheorem{remark}{Remark}[section]

\DeclareMathOperator{\ind}{1{\hskip -2.5 pt}\textrm{I}}

\usepackage{chngcntr}
\counterwithin{figure}{section}
\counterwithin{table}{section}

\newtheorem{example}{Example}[section]


\title{\vspace{-0.8cm} \textbf{Minimum Rényi Entropy Portfolios\vspace{0.3cm}}}

\author{Nathan Lassance\thanks{Corresponding author: \texttt{nathan.lassance@uclouvain.be.}}~ and Frédéric Vrins\thanks{E-mail: \texttt{frederic.vrins@uclouvain.be}.} \\[0.3cm]
\small 
Université catholique de Louvain, Louvain Finance Center (LFIN) \\ \small and Center for Operations Research and Econometrics (CORE). \\ \small Voie du Roman Pays 34, 1348 Louvain-la-Neuve, Belgium. \\[0.5cm]
July 2018 \-- Available at \url{https://papers.ssrn.com/sol3/papers.cfm?abstract_id=2968660}
}
\interfootnotelinepenalty=10000
\allowdisplaybreaks
\date{}
\hypersetup{draft}

\frenchspacing
\titlelabel{\thetitle.\quad}

\begin{document}

\maketitle

\noindent \textbf{Abstract}

\smallskip 

\noindent Accounting for the non-normality of asset returns remains challenging in robust portfolio optimization. In this article, we tackle this problem by assessing the risk of the portfolio through the ``amount of randomness'' conveyed by its returns. We achieve this by using an objective function that relies on the exponential of \textit{Rényi entropy}, an information-theoretic criterion that precisely quantifies the uncertainty embedded in a distribution, accounting for higher-order moments. Compared to Shannon entropy, Rényi entropy features a parameter that can be tuned to play around the notion of uncertainty. A Gram-Charlier expansion shows that it controls the relative contributions of the central (variance) and tail (kurtosis) parts of the distribution in the measure. We further rely on a non-parametric estimator of the exponential Rényi entropy that extends a robust sample-spacings estimator initially designed for Shannon entropy. A portfolio selection application illustrates that minimizing Rényi entropy yields portfolios that outperform state-of-the-art minimum variance portfolios in terms of risk-return-turnover trade-off.

\vspace{0.2cm}

\noindent \textit{Keywords}: Portfolio selection; Shannon entropy; Rényi entropy; Risk measure; Information theory.

\noindent\rule[0.5ex]{\linewidth}{0.5pt}

\section{Introduction}\label{Intro}
In portfolio optimization, it is well-known that a high sensitivity of the optimal portfolio weights to estimation errors in the parameter inputs can render otherwise sound investment strategies largely sub-optimal out-of-sample (see \hyperref[Kolm]{Kolm et al. 2014} and references therein). This is in particular the case of the mean-variance portfolio of \hyperref[Markowitz]{Markowitz (1952)}: the optimal weights are very sensitive to estimation errors in the assets' expected return. To tackle this robustness issue, one can simply disregard the portfolio's expected return constraint, leading to the risk-based allocation framework (see e.g. \hyperref[Ardia]{Ardia et al. 2017}). \textit{Minimum risk portfolios} have in particular attracted investors' attention as ``the minimum-variance portfolio usually performs better out of sample than any other mean-variance portfolio \-- even when the Sharpe ratio or other performance measures related to \textit{both} the mean and variance are used for the comparison'' (\hyperref[DeMiguelNogales]{DeMiguel and Nogales 2009 p.560}).

Minimum risk portfolios are commonly built using the variance as risk measure. As the sample-based minimum variance portfolio is still quite vulnerable to estimation errors, various robust alternatives have been developed (see \hyperref[Fabozzi]{Fabozzi et al. 2010} and \hyperref[Scutella]{Scutellà and Recchia 2013} for reviews). Shrinkage estimation as introduced by \hyperref[Ledoit1]{Ledoit and Wolf (2003, 2004a,b)} has proved particularly appealing. Still, the problem remains that the variance is an adequate risk measure only for Gaussian distributions and is largely unaffected by increasing tail concentration (\hyperref[Vermorken]{Vermorken et al. 2012}). As a result, the minimum variance portfolio does not account for the non-normality of asset returns. Two main alternative approaches can be employed to deal with non-normality. First, one can minimize a downside risk measure, e.g. the minimum VaR and CVaR portfolios. However, such portfolios coincide with a mean-risk approach (\hyperref[Fabozzi]{Fabozzi et al. 2010}), producing robustness issues as for the mean-variance portfolio. Second, one can extend the Taylor expansion of the utility function to include the portfolio return's third and/or fourth moments (see e.g. \hyperref[Adcock]{Adcock 2014}). However, robustifying such higher-order portfolios is very challenging due to increased dimensionality and large sensitivity to outliers.

In this article, we propose a new (albeit natural) way of designing robust minimum risk portfolios that account for the non-normality of asset returns. We do so by minimizing the portfolio return's uncertainty measured via the exponential of \textit{Rényi entropy}, estimated within a robust $m$-spacings framework. The optimal portfolios so obtained are called \textit{minimum Rényi entropy portfolios}. Entropy is a well-known concept coming from information theory. It precisely aims at quantifying the uncertainty/amount of randomness conveyed by a distribution, embedding all higher-order moments (\hyperref[Cover]{Cover and Thomas 2006}). As a result, it is not surprising to notice that Shannon entropy (the most standard definition of entropy) has been recognized as an appealing measure in finance (\hyperref[Sbuelz]{Sbuelz and Trojani 2008}, \hyperref[Zhou]{Zhou et al. 2013}, \hyperref[Ormos]{Ormos and Zibriczky 2014}) portfolio management (\hyperref[Philippatos]{Philippatos and Wilson 1972}, \hyperref[Dionisio]{Dionisio et al. 2006}, \hyperref[Vermorken]{Vermorken et al. 2012}, \hyperref[Flores]{Flores et al. 2017}) and utility theory (\hyperref[Yang]{Yang and Qiu 2005}, \hyperref[Abbas]{Abbas 2006}, \hyperref[Jose]{Jose et al. 2008}). However, when using entropy to \textit{construct} optimal portfolios, the literature is so far limited to employing entropy as a penalty term besides a more standard cost function: one considers the weights as discrete probabilities, and uses their entropy as a penalty term to shrink them towards the equally-weighted solution (see \hyperref[Bera]{Bera and Park 2008}, \hyperref[Zhou]{Zhou et al. 2013}). Instead, we use the entropy of the portfolio return's \textit{distribution} (not that of the weights) as the risk-based cost function. Searching for the weights minimizing the latter amounts to minimize the returns' uncertainty and thus provides the minimum risk portfolio in the sense of information theory. 

In addition, we rely on Rényi entropy, an extension of Shannon entropy. It features a parameter $\alpha\in [0,\infty]$ that allows one to trade off the minimization of central and tail uncertainty. We argue in particular in favor of setting $\alpha \in [0,1]$ as, then, Rényi entropy has natural connections with measures of spread (as shown by the extended Chebyshev's inequality of \hyperref[Campbell]{Campbell 1966}) and with a minimum variance-kurtosis objective (as shown by a novel Gram-Charlier expansion of the measure). The empirical results support that choice as well.

Our contribution is organized as follows. Section \ref{Section2Renyi} explores the theoretical properties of the exponential Rényi entropy and makes the link with the notion of risk. Section \ref{Section3MRE} follows with the minimum Rényi entropy portfolio and its connections with higher-order moments. Section \ref{Section4Est} derives a robust $m$-spacings estimator of the measure and studies its consistency and robustness. We design and perform an empirical out-of-sample performance study of the proposed method in Section \ref{Section5Empirical}. Minimum Rényi entropy portfolios are shown to outperform standard minimum risk portfolios in terms of risk-adjusted performance, while achieving a reasonable level of turnover for values of $\alpha$ close to zero. Section \ref{Conclusion} concludes. 

For conciseness, the proofs of the different propositions are reported in the Online Resources.

\section{Exponential Rényi entropy and risk measurement}\label{Section2Renyi}
We start this section by introducing the \textit{Rényi entropy}, a flexible measure that quantifies the uncertainty of a random variable from its distribution. It encompasses the well-known \textit{Shannon entropy} which is recovered as a special case. We then show how its exponential transform can be thought of as a \textit{deviation risk measure}. A discussion of the impact of Rényi's $\alpha$ parameter closes the section, arguing in particular that setting $\alpha \in [0,1]$ should be favored in our portfolio selection context. 

In the sequel, we respectively denote $F_X$ and $f_X$ the cdf and pdf of a random variable $X$. We are exclusively interested in continuous distributions.

\subsection{Shannon and Rényi entropy}\label{sectionRenyi}
The entropy of a random variable $X$ commonly refers to its Shannon entropy, first introduced by \hyperref[Shannon]{Shannon (1948)}, giving birth to a new scientific discipline: \textit{information theory}. It is defined as  
\begin{equation}\label{diffshannon}
H(X) := H[f_X] := -\mathbb{E}\big(\ln f_X(X)\big)\;.
\end{equation}
This measure is known to quantify the amount of randomness embedded in $X$. For instance, when $X$ is a continuous random variable with bounded support, this quantity is maximized for the uniform distribution, which is the most uncertain one. Shannon entropy embeds many important properties. We refer to \hyperref[Cover]{Cover and Thomas (2006)} for an extended treatment.

\hyperref[Renyi]{Rényi (1961)} proposed a generalization of Shannon entropy in \eqref{diffshannon} with the help of a parameter $\alpha\in\mathbb{R}^+$. The idea was to consider a generalized averaging of $-\ln f_X$, leading to the following definition:
\begin{equation}\label{renyi}
H_\alpha (X) := H_\alpha [f_X] := \frac{1}{1-\alpha}\hspace{0.02cm}\ln \mathbb{E}\big(f_X^{\alpha-1}(X)\big)\;,
\end{equation}
whenever the expectation exists. Shannon entropy is recovered as a special case in the sense that
\begin{equation}
\lim_{\alpha\to 1}H_\alpha(X):=H_1(X)=H(X)\;.
\end{equation}
Just like Shannon entropy, Rényi entropy enjoys interesting properties. However, its exponential transform has more natural properties in the context of risk. The next section is dedicated to a more detailed analysis of the exponential Rényi entropy and its connection with deviation risk measures. 

\subsection{Exponential Rényi entropy}\label{deviationriskmeasure}
We denote by $H_\alpha^{\exp}$ the exponential Rényi entropy, which, from \eqref{renyi}, reads as
\begin{equation}\label{exprenyiformula}
H_\alpha^{\exp}(X) := \exp(H_\alpha(X)) = \left(\int (f_X(x))^\alpha dx\right)^{\frac{1}{1-\alpha}}\;.
\end{equation}
This quantity was first introduced by \hyperref[Campbell]{Campbell (1966)} who studied its relevance as a measure of spread of a distribution for $\alpha \in [0,1]$. We come back to the link between $H_\alpha^{\exp}$ and measures of spread in Section \ref{AlphaZeroOne}. In this article, we apply this measure to the construction of minimum risk portfolios (see Section \ref{Section3MRE}).


\subsubsection{Properties}

From the properties of Rényi entropy (see \hyperref[Koski]{Koski and Persson 1992}, \hyperref[Johnson]{Johnson and Vignat 2007}, \hyperref[Pham1]{Pham et al. 2008}), $H_\alpha^{\exp}$ obeys the below properties. 
\begin{prop}\label{RenyiProperties} Let $c$ be a real constant. $H_\alpha^{\exp}(X)$ satisfies the following properties:
\begin{itemize}
\item[(i)] translation-invariance: 
\begin{equation*}
H_\alpha^{\exp}(X+c) = H_\alpha^{\exp}(X)\;;
\end{equation*}
\item[(ii)] scaling property: 
\begin{equation*}
H_\alpha^{\exp}(c X) = |c| H_\alpha^{\exp}(X)\;;
\end{equation*}
\item[(iii)] it is non-increasing and continuous in $\alpha$\;.
\end{itemize}
\end{prop}

\subsubsection{Connection with deviation risk measures}

Quantifying uncertainty, it is appealing to use the exponential Rényi entropy as a deviation risk measure, introduced by \hyperref[Rockafellar2]{Rockafellar et al. (2006)}.

\begin{definition}
A \textit{deviation risk measure} is any functional $\mathcal{D}: L^p(\Omega) \rightarrow [0,\infty]$ satisfying:\footnote{$L^p(\Omega)$ is the space of random variables defined on the support set $\Omega$ having finite $p^{\text{th}}$ moment.} 
\begin{enumerate}
\item[\textit{(i)}] \textit{Positivity}: $\mathcal{D}(X) > 0$ for all non-constant $X$, and $\mathcal{D}(X) = 0$ for any constant $X$\;;
\item[\textit{(ii)}] \textit{Positive homogeneity}: $\mathcal{D}(c X) = c\mathcal{D}(X) \ \forall c > 0$\;;
\item[\textit{(iii)}] \textit{Translation-invariance}: $\mathcal{D}(X+c) = \mathcal{D}(X) \ \forall c \in \mathbb{R}$\;;
\item[\textit{(iv)}] \textit{Sub-additivity}: $\mathcal{D}(X+Y) \leqslant \mathcal{D}(X) + \mathcal{D}(Y)$\;.
\end{enumerate}
\end{definition}

Let us show that $H_\alpha^{\exp}$ fulfills the first three properties of deviation risk measures (sub-additivity is dealt with in next section).

Positive homogeneity $(ii)$ and translation invariance $(iii)$ result from the properties of $H_\alpha^{\exp}$ in Proposition \ref{RenyiProperties}. Regarding positivity $(i)$, $H_\alpha^{\exp}(X)$ is strictly positive if $X$ is non-constant from the positivity of the density $f_X$. To see that it is null if $X$ is constant, let us compute $H_\alpha^{\exp}(k)$ where $k$ is a constant by computing the limit of $H_\alpha^{\exp}(k+cX)$ as $c$ tends to zero for a given random variable $X$ of finite entropy:
\begin{equation*}
H_\alpha^{\exp}(k)=\lim_{c\downarrow 0}H_\alpha^{\exp}(k+cX)=\lim_{c\downarrow 0} cH_\alpha^{\exp}(X) = 0\;.
\end{equation*}

\subsubsection{The sub-additivity property}\label{subadditivity}

In this section, we begin with underlining that whereas $H_\alpha^{\exp}$ is expected to be sub-additive for most cases encountered in portfolio selection, it is not, \textit{strictly speaking}, sub-additive. 

\begin{prop}\label{PropSubAdd}
$H_\alpha^{\exp}$ is, generally speaking, not a sub-additive measure.
\end{prop}
\begin{proof}
Appendix \ref{AppendixSubAdd} gives three analytical counter-examples to sub-additivity using pairs of random variables $(X,Y)$ : a pair of independent one-sided (Lévy), a pair of independent two-sided bimodal (Gaussian mixtures) based on the entropy bounds derived in \hyperref[Vrins]{Vrins et al. (2007)}, and eventually a pair of comonotonic random variables. 
\end{proof}

We stress that this proposition contradicts some statements recently made in the portfolio management literature (see e.g. \hyperref[Flores]{Flores et al. 2017}).\footnote{We are grateful to the authors of the aforementioned paper for discussion about the provided counter-examples.}

It is however worth stressing that those counter-examples are atypical in portfolio applications. The Lévy distribution for instance is extremely heavy-tailed: none of the moments are defined, and asset returns exibit much lighter tails in practice (\hyperref[Cont]{Cont 2001}). Similarly, multi-modal distributions and comonotonicity are behaviours that rarely arise in portfolio management. 


In fact, just like for the Value-at-Risk (\hyperref[Danielsson]{Danielsson et al. 2013}), sub-additivity of the exponential Rényi entropy can be reasonably assumed to hold in the specific context of portfolio optimization. For instance, the exponential Rényi entropy of $Z\sim\mathcal{N}(\mu,\sigma)$ collapses to $H_\alpha^{\exp}(Z) = \sigma\sqrt{2\pi\alpha^{1/(\alpha-1)}}$ (\hyperref[Koski]{Koski and Persson 1992}). From the stability of the Gausian distribution, the sub-additivity property is equivalent to $\sigma_{X+Y} \leqslant \sigma_X + \sigma_Y$, which in turns holds from the sub-additivity of the standard deviation (\hyperref[Artzner]{Artzner et al. 1999}). 

%
%


However, this particular case is quite restrictive as asset returns are typically not well described by the Gaussian distribution. A more appealing candidate to model the fat tails observed in asset returns is the general class of \textit{elliptical} (a.k.a. \textit{radial}) distributions, that has many applications in mathematical finance and portfolio theory, see e.g. \hyperref[Chen]{Chen et al. (2011)}. Elliptical distributions comprise, among others, the Gaussian, Student's $t$, Cauchy and Laplace distributions. As the next proposition shows, the exponential Rényi entropy is sub-additive when $(X,Y)$ is distributed according to an elliptical distribution, providing a broader and more realistic sufficient condition than the Gaussian setting.

\begin{prop}\label{SubAddElliptical}
Let $(X,Y) \sim \textup{El}(\boldsymbol{\mu},\Sigma,g_2)$ with $\textup{El}(\boldsymbol{\mu},\Sigma,g_2)$ a bivariate elliptical distribution, i.e.
\begin{equation*}\label{BivariateT}
f_{X,Y}(\mathbf{x}) = |\Sigma|^{-1/2}g_2\big((\mathbf{x}-\boldsymbol{\mu})'\Sigma^{-1}(\mathbf{x}-\boldsymbol{\mu})\big)\;,
\end{equation*}
where $g_2$ is a non-negative density generator function and $|\Sigma|$ is the absolute value of the determinant of $\Sigma=\begin{psmallmatrix} 
\sigma_X^2 & \rho\sigma_X\sigma_Y \\
\rho\sigma_X\sigma_Y & \sigma_Y^2 
\end{psmallmatrix}$, the scale matrix of $(X,Y)$. Then, $H_\alpha^{\exp}$ is sub-additive for the pair $(X,Y)$.
\end{prop}
\begin{proof}
See Appendix \ref{ProofElliptical}.
\end{proof}

\begin{remark}
Proposition \ref{SubAddElliptical} can be extended to any dimension, meaning that, if $\boldsymbol{X}=(X_1, \dots, X_n)$ $\sim \text{El}(\boldsymbol{\mu},\Sigma,g_n)$, then  $H_\alpha^{\text{exp}}\big(\sum_{i=1}^n X_i\big) \leqslant \sum_{i=1}^n H_\alpha^{\text{exp}}(X_i)$. Combined with the positive homogeneity property, this means that $H_\alpha^{\exp}$ is sub-additive at the portfolio level, i.e. denoting $(w_1, \dots, w_n)$ positive portfolio weights, we have $H_\alpha^{\exp}\big(\sum_{i=1}^n w_iX_i \big) \leqslant \sum_{i=1}^n w_i H_\alpha^{\exp}(X_i)$.
\end{remark}

\subsection{Exponential Rényi entropy as a flexible risk measure}\label{choicealpha}

This section explains how $\alpha$ allows to tune the relative contributions of the central and tail parts of the distribution, leading to different definitions of risk.

To show this, consider the two extreme cases $\alpha=0$ and $\alpha=\infty$. As the next proposition shows, when $\alpha=0$, $H_\alpha^{\exp}(X)$ measures the spread of $X$, while when $\alpha=\infty$, $H_\alpha^{\exp}(X)$ is given by the inverse of the supremum of $f_X$ (see \hyperref[Koski]{Koski and Persson 1992}).

\begin{prop}
Let $X$ be a continuous random variable, then $H_0^{\exp}(X) := \lim\limits_{\alpha \downarrow 0}H_\alpha^{\exp}(X)$ and $H_\infty^{\exp}(X) := \lim\limits_{\alpha \rightarrow \infty}H_\alpha^{\exp}(X)$ read as
\begin{gather}
H_0^{\exp}(X)  = \mathcal{L}(\Omega)\;, \\
H_\infty^{\exp}(X) = 1/\sup f_X\;,
\end{gather}
where $\mathcal{L}(\Omega)$ is the Lebesgue measure of the support set of $X$, $\Omega:=\{ x : f_X(x)>0 \}$.
\end{prop}



As one can see, changing $\alpha$ modifies the way we measure entropy, i.e. uncertainty, and so the risk. Taking $\alpha=0$ amounts to measure risk by the support of the distribution, while taking $\alpha=\infty$ amounts to measure risk by the maximal probability. By minimizing the portfolio return's entropy, as we propose in the next section, one can therefore minimize the density range on the $x$-axis with $\alpha=0$ or maximize the density range on the $y$-axis with $\alpha=\infty$. $H_0^{\exp}$ focuses only on extreme values (low entropy = low distance between extreme values), while $H_\infty^{\exp}$ focuses only on the most likely outcomes (low entropy = high maximal probability), and so, in the symmetric unimodal case, on the center of the distribution. 

From those two extreme cases, it results that, in portfolio selection applications, taking $\alpha$ too large is not desirable because $H_\alpha^{\exp}$ will barely be affected by tail events, which is the criticism that is made about the variance. Conversely, by decreasing $\alpha$, we assign more similar ``weight'' to all events, hence increasing the relative importance of tail events compared to events around the mode. 


\begin{example}

To illustrate this effect of $\alpha$, Figure \ref{StudentTailWeight} shows how $H_\alpha^{\exp}(X)$, $X \sim$ $t$-Student($\nu$), evolves with $\nu$ for different values of $\alpha$. From \hyperref[Zografos1]{Zografos and Nadarajah (2005)}, $H_\alpha^{\exp}(X)$ expresses as
\begin{equation}\label{RenyiStudent}
H_\alpha^{\exp}(X) = (\pi\nu)^{\frac{1-\alpha}{2}}\Bigg(\frac{\Gamma\big(\frac{\nu+1}{2}\big)}{\Gamma\big(\frac{\nu}{2}\big)}\Bigg)^\alpha \frac{\Gamma\big(\frac{\alpha(\nu+1)}{2}-\frac{1}{2}\big)}{\Gamma\big(\frac{\alpha(\nu+1)}{2}\big)}\;.
\end{equation}
As one can see, when going from $\alpha=2$ to $\alpha=0.4$, the sensitivity to the increase of tail uncertainty is indeed increasingly visible.

\begin{figure*}
\centering
\caption{The sensitivity of $H_\alpha^{\exp}$ to the tail uncertainty of $X \sim t$-Student($\nu$) increases when $\alpha$ decreases. \textit{Notes:} We report $H_\alpha^{\exp}(X)$ in \eqref{RenyiStudent} for different values $\alpha$ as a function of the number of degrees of freedom $\nu$ of $X$.}
\vspace{-0.25cm}
\hspace{1cm}\includegraphics[width=0.8\textwidth]{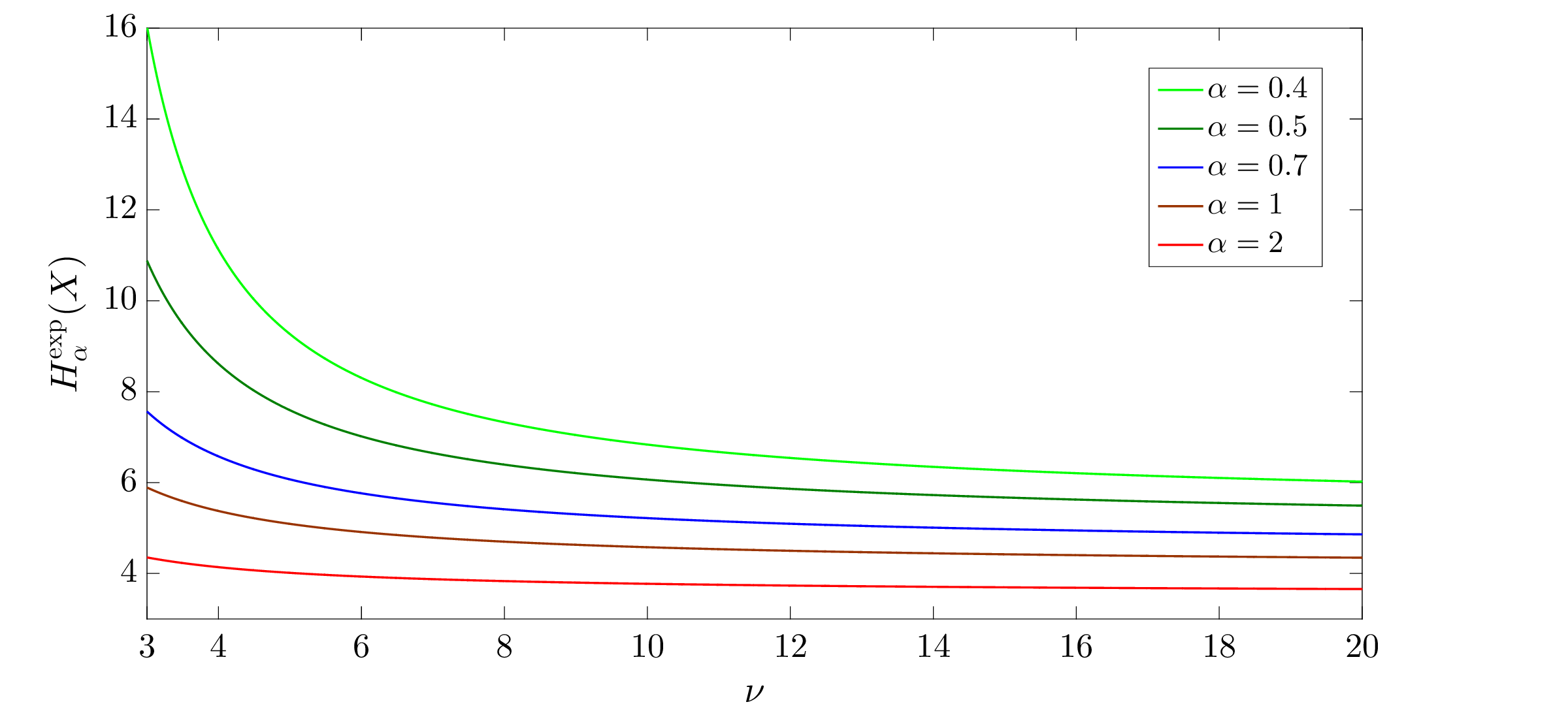}
\label{StudentTailWeight}
\vspace{-0.2cm}
\end{figure*}

\end{example}

\subsection{Appeal of the $\alpha\in [0,1]$ case in a portfolio selection context}\label{AlphaZeroOne}

The previous section argued how decreasing the value of $\alpha$ allows one to obtain a measure of entropy that is increasingly more affected by tail events. In this section, we show more specifically that investors should favor setting $\alpha \in [0,1]$, in which case $H_\alpha^{\exp}$ provides an appealing extension of the variance as a risk criterion.

First, for $\alpha \in [0,1]$, $H_\alpha^{\exp}$ has close connections with measures of spread. By minimizing the variance, investors ensure that most of the probability distribution of the portfolio return is concentrated in some small interval around the mean. This is established by Chebyshev's inequality which, given the set $A_k = \{x \in \Omega \ | \ |x-\mathbb{E}(X)|\geq k \}$, says that
\begin{equation*}
\mathbb{P}(X\in A_k) \leq \frac{\mathbb{V}\text{ar}(X)}{k^2}\;.
\end{equation*}
Similarly, for $\alpha \in [0,1]$, a small value of $H_\alpha^{\exp}(X)$ entails that most of the probability distribution of $X$ is concentrated on a set of small Lebesgue measure. This is determined by \hyperref[Campbell]{Campbell (1966)}'s extended Chebyshev's inequality. 

\begin{prop}\label{PropChebyshev}
Let $X$ be a continuous random variable whose Rényi entropy is defined for $\alpha \in [0,1]$. Then, given $\alpha \in [0,1]$ and $A_k' = \{ x\in \Omega \ | \ f_X(x) \leq k \}$, the following inequality holds:
\begin{equation}\label{IneqRenyi}
\mathbb{P}(X\in A_k') \leq \big(kH_\alpha^{\exp}(X)\big)^{1-\alpha}\;.
\end{equation}
\end{prop}
\noindent This inequality is more general than Chebyshev's inequality as it does not only deal with the absolute deviation around the mean, but instead relates the spread in terms of the size of the set on which most of the probability density is situated. For a unimodal random variable $X$ with $\Omega = \mathbb{R}$ and $f_X(x) \rightarrow 0$ as $x \rightarrow \pm \infty$, which is common for asset returns, then there are only two values of $x$ for which $f_X(x) = k$ (if $k < \text{mode}(X)$). Denoting them $x_k^-$ and $x_k^+$, we have $x_k^- < \text{mode}(X) < x_k^+$ and \eqref{IneqRenyi} means that
\begin{equation*}
\mathbb{P}(x_k^- < X < x_k^+) \geq 1-\big(kH_\alpha^{\exp}(X)\big)^{1-\alpha} \rightarrow 1
\end{equation*}
as $H_\alpha^{\exp}(X) \rightarrow 0$. In other words, if $H_\alpha^{\exp}(X)$ is small, the probability that $X$ is concentrated on a small interval around its mode is close to one. 


A second argument in favor of setting $\alpha \in [0,1]$ is related to the Gram-Charlier expansion of Rényi entropy derived in Section \ref{GramCharlier}. The expansion will show that, when $\alpha \in [0,1]$, the coefficient in front of the kurtosis of $X$ is positive (and instead negative for $\alpha > 1$) and so that an increase in kurtosis decreases the Rényi entropy, as desired by investors.

Third, the empirical results presented in Section \ref{EmpiricalResults} display a largely better performance of the minimum Rényi entropy portfolio when $\alpha \in [0,1]$ as well.

\section{Minimum Rényi entropy portfolio}\label{Section3MRE}
Given the good match between the theoretical properties of $H_\alpha^{\exp}$ and the desirable features of portfolio selection criteria, we use this measure as an objective function to design investment strategies. In particular, we propose to construct a minimum risk portfolio, called the \textit{minimum Rényi entropy (MRE) portfolio}, that minimizes the exponential Rényi entropy of the portfolio return. We denote by $P$ the portfolio return such that $P =\boldsymbol{w}'\boldsymbol{X}= \sum_{i=1}^n w_i X_i$ where $\boldsymbol{w}=(w_1,\dots,w_n)'$ is the vector of portfolio weights and $\boldsymbol{X}=(X_1,\dots,X_n)'$ is the random vector of assets' return. 

\subsection{Definition}\label{DefinitionROpt}
The MRE portfolio over an investment set of $n$ assets for a given $\alpha$ is defined as
\begin{equation}\label{ROptProgram}
\boldsymbol{w}_{\alpha}^\star := \arg\underset{\boldsymbol{w}\in \mathcal{W}}{\min}\hspace{0.05cm}H_\alpha^{\exp}(P)\;,
\end{equation}
where $\mathcal{W}$ is a set of constraints on $\boldsymbol{w}$, including the full investment constraint $\boldsymbol{1}_n'\boldsymbol{w}=1$. 

Note that, being affected by higher-order moments that are non-convex functions of the weights (\hyperref[Jurczenko]{Jurczenko and Maillet 2006}), the optimization program in \eqref{ROptProgram} may not  necessarily be convex, i.e. feature only one local optimum. Hence, in solving for the MRE portfolio, one must ideally resort to global optimization techniques rather than standard local optimizers. We come back to this matter in Section \ref{Optimization}.

\subsection{Connection with moments-based portfolios: A Gram-Charlier expansion}\label{GramCharlier}

Traditional minimum risk portfolios are built from specific moments of the portfolio return, typically the variance, leading to the minimum variance portfolio, and possibly higher-order moments as in e.g. \hyperref[Martellini]{Martellini and Ziemann (2010)}, \hyperref[Adcock]{Adcock (2014)} and \hyperref[Vanduffel]{Vanduffel and Yao (2017)}, that we call \textit{higher-order portfolios}. 

In the classical Markowitz Gaussian setting, the MRE portfolio coincides with the minimum variance portfolio as there is a one-to-one correspondence between $H_\alpha^{\exp}$ and the variance for Gaussian random variables. 

In a more general setting however, the MRE portfolio is more attractive than the minimum variance one as it accounts for the uncertainty coming from higher-order moments. To see this, it is useful to derive a truncated Gram-Charlier (GC) expansion of Rényi entropy. 

\begin{prop}\label{PropGC}
Let $X\in L^4(\Omega)$ and note $\tilde{X} = (X-\mathbb{E}(X))/\sqrt{\mathbb{V}\textup{ar}(X)}$ its standardized copy. Define $\mathbb{S}\textup{kew}(X)=\mathbb{E}(\tilde{X}^3)$ and $\mathbb{K}\textup{urt}(X)=\mathbb{E}(\tilde{X}^4)-3$. Then, the truncated GC expansion of $H_\alpha(X)$, denoted $H_\alpha^{GC}(X)$, writes as
\begin{equation}\label{GCApprox}
\begin{aligned}
H_\alpha^{GC}(X) := \hspace{0.05cm} &H_\alpha\big[\mathcal{N}\big(0,\sqrt{\mathbb{V}\textup{ar}(X)}\big)\big] + k_1(\alpha) \mathbb{K}\textup{urt}(X) \\  &+ k_2(\alpha) \mathbb{S}\textup{kew}(X)^2 + k_3(\alpha) \mathbb{K}\textup{urt}(X)^2\;,
\end{aligned}
\end{equation}
with coefficients
\begin{equation}\label{CoeffGC}
\begin{aligned}
k_1(\alpha) &= \frac{1-\alpha}{8\alpha}\;, \\
k_2(\alpha) &= -\frac{3\alpha^2-6\alpha+5}{24\alpha^{3/2}}\;, \\
k_3(\alpha) &= - \frac{3\alpha^4-12\alpha^3+42\alpha^2-60\alpha+35}{384\alpha^{5/2}}\;.
\end{aligned}
\end{equation}
\end{prop}
\begin{proof}
See Appendix \ref{AppendixGC}.
\end{proof}

The three coefficients $k_1(\alpha)$, $k_2(\alpha)$ and $k_3(\alpha)$ are displayed in Figure \ref{GramCharlierRenyi}.

\begin{figure*}[ht]
\centering
\caption{Coefficients $k_1(\alpha)$, $k_2(\alpha)$ and $k_3(\alpha)$ of the Gram-Charlier expansion of Rényi entropy. \textit{Notes:} The Gram-Charlier expansion and expressions for the coefficients are displayed in Equations \eqref{GCApprox} and \eqref{CoeffGC}.}
\vspace{-0.3cm}
\hspace{1.1cm}\includegraphics[width=0.8\textwidth]{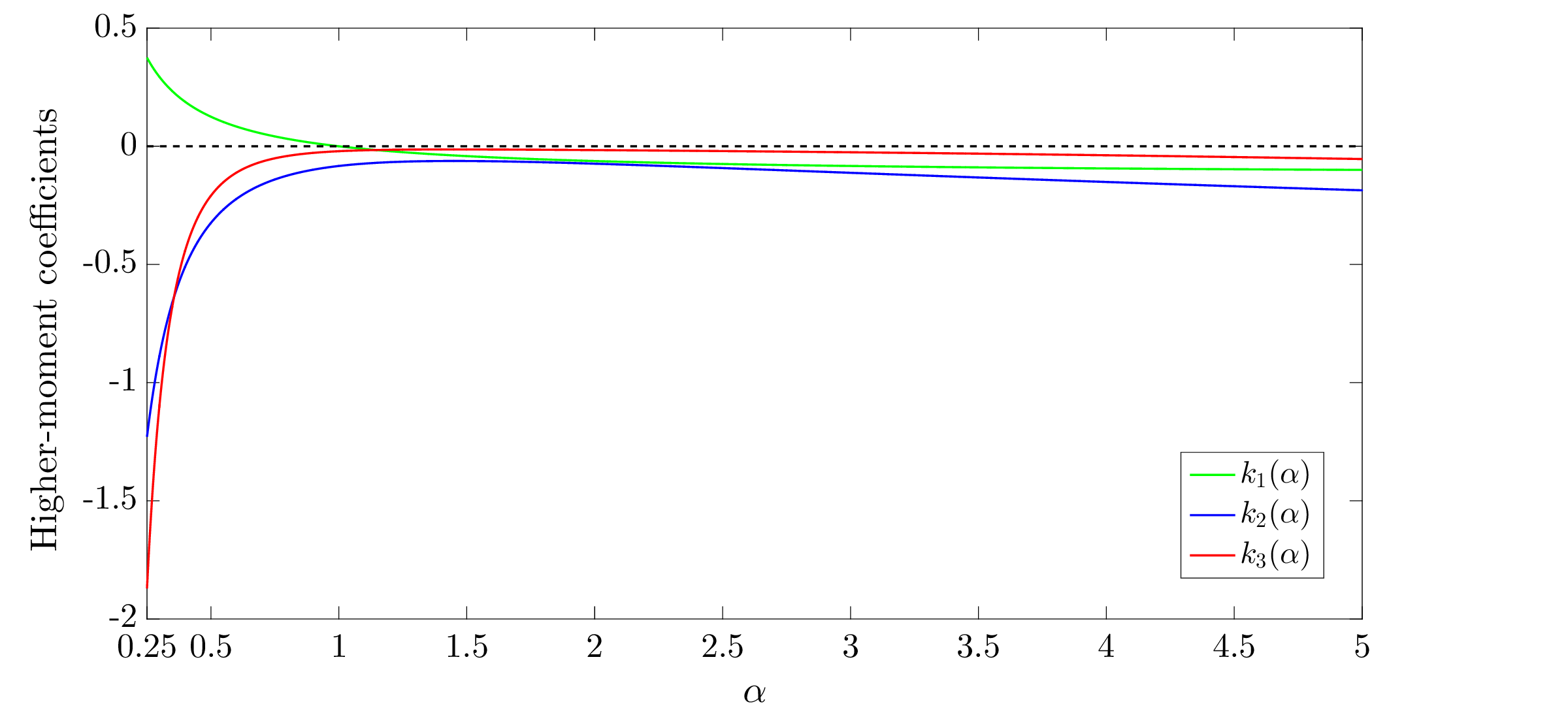}
\label{GramCharlierRenyi}
\vspace{-0.2cm}
\end{figure*}

Setting $\alpha=1$, we get the GC expansion $H_1^{GC}(X) = H_1\big[\mathcal{N}\big(0,\sqrt{\mathbb{V}\textup{ar}(X)}\big)\big] - \frac{1}{12}\mathbb{S}\textup{kew}(X)^2 \\ - \frac{1}{48}\mathbb{K}\textup{urt}(X)^2$ derived in \hyperref[Hyvarinen]{Hyv\"arinen et al. (2001)}. Hence, the ability to control for kurtosis is a notable advantage of Rényi entropy over Shannon entropy, as in the latter case $k_1(1)=0$. 

The connection between the MRE and higher-order portfolios is now made explicit. We have $H_\alpha\big[\mathcal{N}\big(0,\sqrt{\mathbb{V}\textup{ar}(P)}\big)\big]=H_\alpha\big[\mathcal{N}(0,1)\big]+\frac{1}{2}\ln \mathbb{V}\textup{ar}(P)$, which yields\footnote{Minimizing $H_\alpha^{\exp}(P)$ or $H_\alpha(P)$ is equivalent as $\exp(x)$ is a monotonically increasing function.}
\begin{equation}
\begin{aligned}
\boldsymbol{w}^\star_\alpha \approx \arg \min_{\boldsymbol{w}\in \mathcal{W}} &\frac{1}{2}\ln\mathbb{V}\textup{ar}(P) + k_1(\alpha) \mathbb{K}\textup{urt}(P) \\ &+ k_2(\alpha) \mathbb{S}\textup{kew}(P)^2 + k_3(\alpha) \mathbb{K}\textup{urt}(P)^2\;.
\end{aligned}
\end{equation}
When $f_P$ is close to a Gaussian, the main contributing higher-order term will be $k_1(\alpha) \mathbb{K}\textup{urt}(P)$. When $\alpha < 1$, $k_1(\alpha)>0$ and so the MRE portfolio is similar to a minimum variance-kurtosis portfolio, which, as noted by \hyperref[Martellini]{Martellini and Ziemann (2010)}, is a well-performing higher-order portfolio as estimators for even moments are less noisy than estimators for odd moments. When $\alpha>1$ however, $k_1(\alpha)<0$ and so the effect is reversed. In line with investors' preferences for kurtosis, setting $\alpha \in [0,1]$ is thus more natural, as we noted in Section \ref{AlphaZeroOne}. 

Hence, in line with Section \ref{choicealpha}, by playing with $\alpha$ one trades off the minimization of the central (variance) and tail (kurtosis) uncertainty, i.e. of the first two even moments.

\begin{example}
Consider $n=2$ assets $X_1=X$ $\perp$ $X_2=Y$ that follow a zero-mean Student's $t$ distribution with $(\sigma_X,\nu_X)=(0.3,10)$ and $(\sigma_Y,\nu_Y)=(0.2,6)$. We build a portfolio $P=wX + (1-w)Y$ and evaluate $H_\alpha^{\exp}(P)$ by numerical integration. On Figure \ref{EffectOfKurtosis}, we display how $H_\alpha^{\exp}(P)$, $\sqrt{\mathbb{V}\text{ar}(P)}$ and $\mathbb{K}\text{urt}(P)$ depend on $w$. As we can see, when $\alpha$ is high enough, $w^\star$ is close to the minimum variance solution because $\sigma_X > \sigma_Y$ and that mostly central events matter when $\alpha$ is high. However, the more $\alpha$ decreases, the more important is the impact of the fatter tails of $Y$ ($\nu_Y < \nu_X$) and so the more $w^\star$ approaches the minimum kurtosis solution. 
\end{example}

\begin{figure*}
\centering
\caption{The minimum Rényi entropy portfolio trades off the minimization of variance and kurtosis. \textit{Notes:}  $X \perp Y$ follow a zero-mean Student's $t$ distribution with $(\sigma_X,\nu_X)=(0.3,10)$ and $(\sigma_Y,\nu_Y)=(0.2,6)$. We consider a portfolio $P=wX + (1-w)Y$ and plot its standard deviation $\sqrt{\mathbb{V}\text{ar}(P)}$, excess kurtosis $\mathbb{K}\text{urt}(P)$ and exponential Rényi entropy $H_\alpha^{\exp}(P)$ for different values of $\alpha$ and $w\in[0,1]$.}
\vspace{-0.3cm}
\hspace{0.5cm}\includegraphics[width=0.8\textwidth]{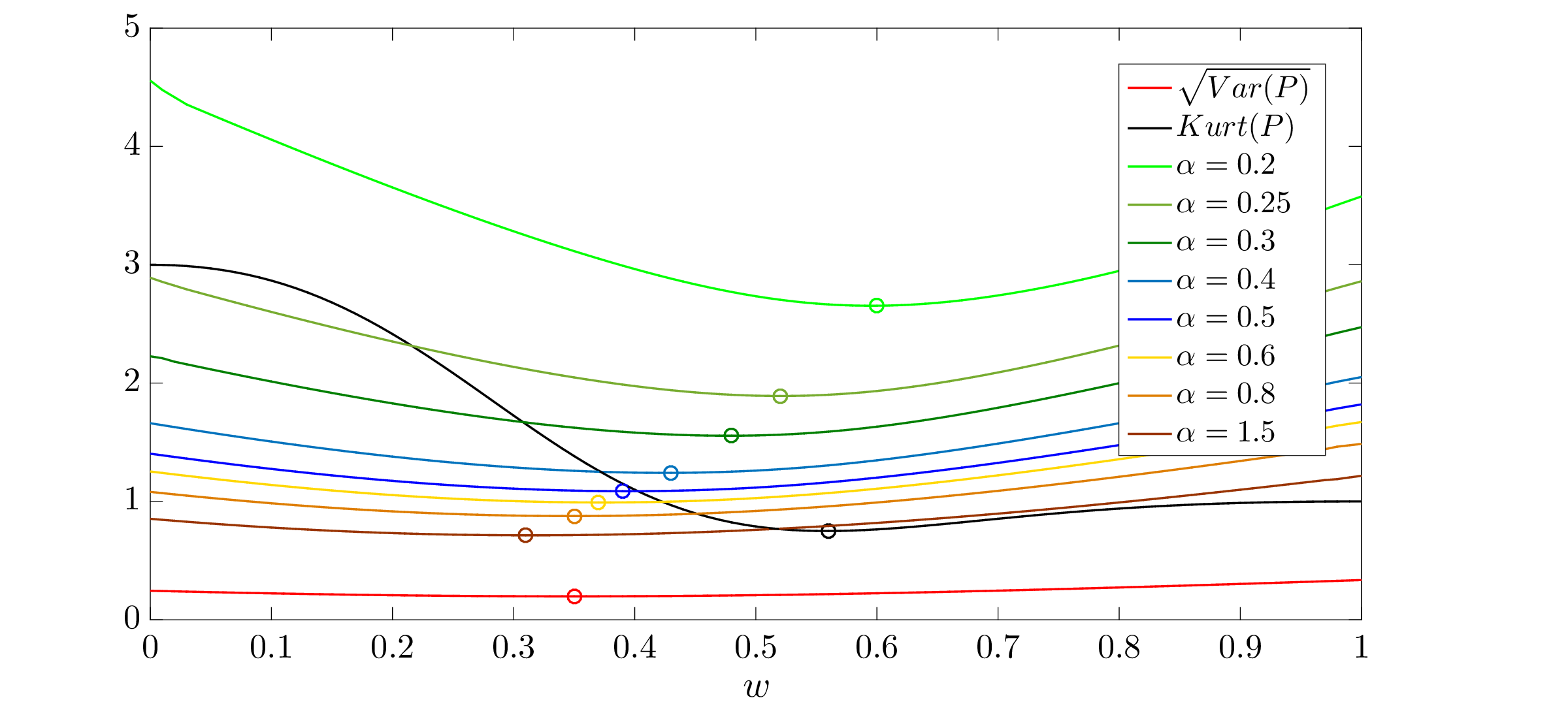}
\label{EffectOfKurtosis}
\vspace{-0.35cm}
\end{figure*}

Given that $k_2(\alpha)$ and $k_3(\alpha)$ are negative for all $\alpha$, the two additional terms $k_2(\alpha) \mathbb{S}\textup{kew}(P)^2$ and $k_3(\alpha) \mathbb{K}\textup{urt}(P)^2$ can be interpreted as driving the solution away from the Gaussian's skewness and kurtosis. This is intuitive as, under a fixed mean and variance, the Shannon entropy is maximized for the Gaussian distribution (\hyperref[Cover]{Cover and Thomas 2006}).

Finally, note that the optimization program in \eqref{ROptProgram} can accommodate an additional constraint on the portfolio expected return of the form $\mathbb{E}(P) = \boldsymbol{w}'\boldsymbol{\mu} \geq \mu_0$ with $\boldsymbol{\mu}$ the vector of assets' expected return, to account for the fact that investors do not only look at the risk, but also at the reward. In light of the GC expansion, such a framework would be linked to higher-moment efficient frontiers studied by e.g. \hyperref[Adcock]{Adcock (2014)} and \hyperref[Qi]{Qi et al. (2017)}. In the empirical study, we however concentrate on risk minimization due to the technical difficulties inherent in estimating the vector $\boldsymbol{\mu}$ (see Section \ref{Intro}) that yield significant loss in out-of-sample performance.

\section{Robust $m$-spacings estimator of $H_\alpha^{\exp}$}\label{Section4Est}
In this section, we explain how, given a finite sample $\{P_t\}_{1\leqslant t \leqslant T}$ with $P_t = \sum_{i=1}^n w_i X_{i,t}$, one can estimate $H_\alpha^{\exp}(P)$ in a robust way. In particular, we propose to use an estimator based on sample-spacings and discuss its properties in terms of consistency and robustness.


\subsection{Motivation and expression for the $m$-spacings estimator}\label{derivation}

To avoid making assumptions about the portfolio return's distribution, we are looking for a non-parametric estimator of $H_\alpha^{\exp}(P)$. There exists substantial research on non-parametric estimation of Shannon entropy, reviews of which can be found in \hyperref[Beirlant]{Beirlant et al. (1997)}. 

A natural way of estimating entropy is the plug-in estimate where a density estimator is plugged into the integral defining the entropy. One could for example choose the well-known Parzen (a.k.a. kernel) estimator. However, this estimator is known to be very sensitive to the bandwidth parameter, which can yield issues of stability for our portfolio optimization context.\footnote{When applied to our empirical data in Section \ref{Section5Empirical}, the Parzen estimator (with Gaussian kernel) achieves a worse risk-adjusted performance than the $m$-spacings estimator considered here for a wide range of values of the bandwidth parameter.} Instead, $m$-spacings estimation of entropy is more reliable: \hyperref[Wachowiak]{Wachowiak et al. (2005)} show that such estimators ``are robust and accurate, compare favorably to the popular Parzen window method for estimating entropies, and, in many cases, require fewer computations than Parzen methods.'' Therefore, we rely on a robust $m$-spacings estimator of Rényi entropy that extends the Shannon entropy $m$-spacings estimator of \hyperref[Miller]{Learned-Miller and Fisher (1993)}, a ``consistent, rapidly converging and computationally efficient estimator of entropy which is robust to outliers." 

Appendix \ref{DerivMSpacings} gives a detailed derivation of the estimator, of which we only report the final expression here for conciseness. 

\begin{prop}
Let $X$ be a continuous random variable. Then, the $m$-spacings estimator of $H_\alpha^{\exp}(X)$, denoted $\widehat{H}_\alpha^{\exp}(m,T)$, reads as
\begin{equation}\label{mspacingoriginal}
\begin{aligned}
&\widehat{H}_\alpha^{\exp}(m,T) :=  \\ &\Bigg(\frac{1}{T-m}  \sum_{i=1}^{T-m}  \bigg(\frac{T+1}{m}\big(X^{(i+m:T)}-X^{(i:T)}\big)\bigg)^{1-\alpha}\Bigg)^{\frac{1}{1-\alpha}}\;,
\end{aligned}
\end{equation}
where $X^{(1:T)} \leqslant X^{(2:T)} \leqslant \cdots \leqslant X^{(T:T)}$ are the order statistics of $X$ (i.e. the observations sorted by increasing order) and $m \in [1,T-1]$ is an integer parameter.
\end{prop}
\begin{proof}
See Appendix \ref{DerivMSpacings}.
\end{proof}
The parameter $m$ is a free parameter of great importance: increasing its value reduces the estimator's variance by grouping more order statistics in each spacing $X^{(i+m:T)}-X^{(i:T)}$. As a consequence, it plays a crucial role as it determines the robustness of the estimator, and in turn of the MRE portfolio. We come back to this in Section \ref{Robustness}.

Taking the limit where $\alpha\rightarrow 1$, we recover the exponential of the estimator of \hyperref[Miller]{Learned-Miller and Fisher (1993)}:
\begin{equation}\label{ExpShannonEst}
\begin{aligned}
&\widehat{H}_1^{\exp}(m,T) := \\ &\exp\Bigg(\frac{1}{T-m} \sum_{i=1}^{T-m}\ln\bigg(\frac{T+1}{m}\big(X^{(i+m:T)}-X^{(i:T)}\big)\bigg)\Bigg)\;.
\end{aligned}
\end{equation} 

\subsection{Properties of the $m$-spacings estimator}\label{PropertiesEst}
$m$-spacings estimation of entropy has attracted numerous research (see \hyperref[Beirlant]{Beirlant et al. 1997}) and dates a while back, e.g. \hyperref[Vasicek]{Vasicek (1976)}. However, it has been considered mainly for Shannon entropy  and, even for this specific case, only asymptotic behaviour has been studied. In the general case $\alpha \neq 1$, consistency has not been established. In this section, we first discuss the estimator's properties in terms of consistency, arguing that the estimator's asymptotic bias can be ignored for the sake of our portfolio application. Second, we show how the parameter $m$ determines the estimator's robustness.

\subsubsection{Asymptotic bias}\label{Consistency}

Let us first consider the case $\alpha=1$ and denote $\widehat{H}_1(m,T):=\ln\widehat{H}_1^{\exp}(m,T)$. \hyperref[VanEs]{van Es (1992)} proved that $\widehat{H}_1(m,T)$ is asymptotically biased but, interestingly, that the bias only depends on the fixed value of $m$ and not on the density $f_X$:
\begin{equation}\label{BiasEst}
\widehat{H}_1(m,T) - H_1(X) \rightarrow \psi(m) - \ln m \ \ \ \text{a.s.}\;,
\end{equation}
where $\psi(x)=\frac{d}{dx}\ln\Gamma(x)$ is the digamma function. 
Equation \eqref{BiasEst} means that we can simply subtract the bias to get a consistent estimator. Getting back to the exponential case, this means that
\begin{equation}\label{BiasCorrectedEst}
me^{-\psi(m)}\widehat{H}_1^{\exp}(m,T) \rightarrow H_1^{\exp}(X) \ \ \ \text{a.s.}
\end{equation}

Interestingly, as readily seen from \eqref{BiasCorrectedEst}, because the asymptotic bias depends only on $m$ when $\alpha=1$, using the bias-corrected estimator in \eqref{BiasCorrectedEst} or the biased estimator in \eqref{ExpShannonEst} is equivalent when searching the weights that minimize the entropy in \eqref{ROptProgram}. Ideally, we would want the same result to hold for all $\alpha$, i.e. the asymptotic bias to depend only on $\alpha$ and $m$. While such a result is not known, \hyperref[Hegde]{Hegde et al. (2005)} note that ``in many practical applications, [...]  this bias does not affect the solution, since it is independent of the true data distribution [...].'' Further, as we now show, the estimator's bias for $X$ and $\tilde{X} = (X-\mu)/\sigma $ is the same, i.e. it does not depend on the specific location and scale of $X$.

\begin{prop}\label{BiasIndep}
Let $\tilde{X} = (X-\mu)/\sigma$ and $\widehat{H}_\alpha(X;m,T) := \ln\widehat{H}^{\exp}_\alpha(X;m,T)$, then the $m$-spacings estimator's bias $\mathbb{B}\big(\widehat{H}_\alpha(X;m,T)\big) := \mathbb{E}\big(\widehat{H}_\alpha(X;m,T)\big) - H_\alpha(X) =  \mathbb{B}\big(\widehat{H}_\alpha(\tilde{X};m,T)\big)$.
\end{prop}
\begin{proof}
See Appendix \ref{AppendixBias}.
\end{proof}

\subsubsection{Robustness to outliers}\label{Robustness}

The parameter $m$ acts as a smoothing parameter that controls the estimator's variance. This section shows that increasing $m$ makes the $m$-spacings estimator more robust to outliers, which is crucial to ensure a solid out-of-sample performance of the MRE portfolio. Robustness conveys that a small perturbation from the true return distribution yields only a small change in the estimator's value. 

In assessing the robustness of an estimator, the \textit{Empirical Influence Function} (EIF) represents a useful tool (see \hyperref[Hampel]{Hampel et al. 1986}). Given an estimator $\hat{\theta}(X_1,\dots,X_T)$ of a quantity $\theta$ based on a sample  of size $T$, $\text{EIF}_{\hat{\theta}}(\hat{r})$ measures the sensitivity of the estimator $\hat{\theta}$ to the addition of a supplementary observation $\hat{r}$ in the sample:
\begin{equation}\label{EIFdef}
\text{EIF}_{\hat{\theta}}(\hat{r}) := (T+1)\big(\hat{\theta}(X_1,\dots,X_T;\hat{r})-\hat{\theta}(X_1,\dots,X_T)\big)\;.
\end{equation}
Intuitively, the lower is $\text{EIF}_{\hat{\theta}}(\hat{r})$, the more robust is the estimator $\hat{\theta}$. Figure \ref{EIFGaussian} illustrates the EIF of the $m$-spacings estimator - $\text{EIF}_{\widehat{H}^{\exp}_\alpha}(\hat{r})$ - for $T=250$ values from $X \sim \mathcal{N}(\mu=0, \sigma=0.2)$. Following \hyperref[Hampel]{Hampel et al. (1986)}, we set $X_i = \mu + \sigma \Phi^{-1}\big(\frac{i}{T+1}\big)$ to eliminate the random sample variability. We consider $\hat{r}\in [-5\sigma,5\sigma]$ and report the results for $\alpha=0.5$ only as other values yield similar insights. One can indeed observe that the EIF decreases with $m$ for large enough values of $\hat{r}$, i.e. for outliers.

\begin{figure*}[ht]
\caption{Increasing $m$ improves the robustness to outliers of the $m$-spacings estimator $\widehat{H}^{\exp}_\alpha(m,T)$. \textit{Notes:} $T=250$ observations are generated from $X \sim$ $\mathcal{N}(\mu=0, \sigma=0.2)$ by setting $X_i = \mu + \sigma \Phi^{-1}\big(\frac{i}{T+1}\big)$. The EIF of $\widehat{H}^{\exp}_\alpha(m,T)$ \-- $\text{EIF}_{\widehat{H}^{\exp}_\alpha}(\hat{r})$ \-- is then reported for $\alpha=0.5$ and different values of $m$. It decreases with $m$ in the space of outliers.}
\vspace{-0.3cm}
\hspace{2.5cm}\includegraphics[width=0.8\textwidth]{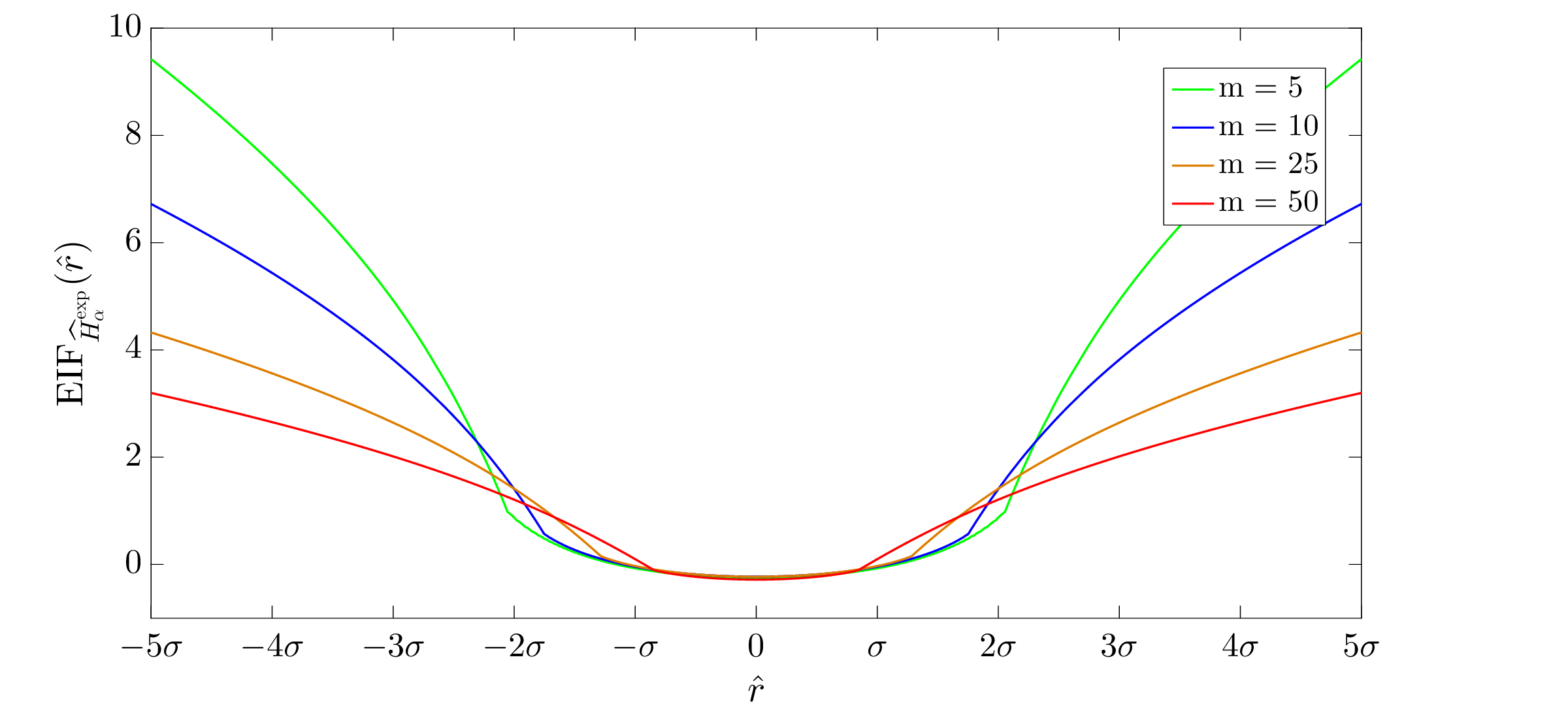}
\label{EIFGaussian}
\vspace{-0.2cm}
\end{figure*}

\section{Out-of-sample empirical study}\label{Section5Empirical}
We finish the article with an out-of-sample performance study of the MRE portfolio that aims at showing the practical interest of the proposed portfolio policy compared to several existing strategies. The study is performed on six datasets commonly used as benchmarks in the portfolio optimization literature.

\subsection{Methodology}\label{Methodo}

\subsubsection{Strategies of comparison}\label{ComparisonOtherModels}

The reported results compare the MRE portfolio with $\alpha \in \{ 0.3, 0.5, 0.7, 1, 1.5, 2 \}$ to five different minimum variance (MV) portfolios. The first four solve the quadratic optimization program 
\begin{equation}
\boldsymbol{w}^\star = \arg\underset{\boldsymbol{w}\in \mathcal{W}}{\min}\hspace{0.05cm} \boldsymbol{w}' \Sigma \boldsymbol{w}\;
\end{equation}
by estimating $\Sigma$ with the sample covariance matrix $\widehat{\Sigma}$ and the three robust shrinkage estimators developed by \hyperref[Ledoit1]{Ledoit and Wolf (2003, 2004a,b)}:
\begin{equation}\label{Shrinkage}
\begin{aligned}
&\widehat{\Sigma}_{CC} := \delta^\star \widehat{F}_{CC} + (1-\delta^\star)\widehat{\Sigma} \;,
\widehat{\Sigma}_{SF} := \delta^\star \widehat{F}_{SF} + (1-\delta^\star)\widehat{\Sigma} \;, \\ 
&\widehat{\Sigma}_{I} := \delta^\star \widehat{F}_{I} + (1-\delta^\star)\widehat{\Sigma}\;,
\end{aligned}
\end{equation} 
where $\delta^\star$ minimizes the Frobenius norm between the shrinkage estimator and the true matrix $\Sigma$. The three target matrices are based upon a constant correlation model ($\widehat{F}_{CC}$), a single-factor model ($\widehat{F}_{SF}$) and a multiple of the identity matrix ($\widehat{F}_{I}$). 

The fifth MV portfolio is the one-step M-portfolio (MP) of \hyperref[DeMiguelNogales]{DeMiguel and Nogales (2009)}:
\begin{equation}
(\boldsymbol{w}^\star,\mu^\star) = \arg \hspace{-0.12cm}\min_{\boldsymbol{w}\in \mathcal{W}, \hspace{0.05cm} \mu} \frac{1}{T}\sum_{t=1}^T \rho(P_t-\mu)\;,
\end{equation}
where $\rho$ is the Huber's robust loss function
\begin{equation}
\rho(x) := \left\{
                \begin{array}{ll}
                  x^2/2 \ \ \ \ \ \ \ \ \ \ \ \ \ \hspace{0.03cm} \text{if} \ |x|\leqslant c \\
                  c(|x|-c/2) \ \  \text{if} \ |x| > c  
                \end{array}\;, \ \ c=1\%\;.
              \right.
\end{equation}

We note that we have also implemented the robust Bayes-Stein mean-variance portfolio of \hyperref[Jorion]{Jorion (1986)} as well as the minimum VaR portfolio using the robust estimator in \hyperref[BoudtPeterson]{Boudt et al. (2008)}. We do not report their results as, even though such criteria are positively affected by higher returns, they feature a lower risk-adjusted performance than the MRE portfolio due to their sensitivity to the portfolio expected return. The equally-weighted strategy has been considered as well but, while it naturally achieves the lowest turnover, it is largely outperformed by all the other strategies and so is not reported either. 

\subsubsection{Datasets}\label{DataDescription}

We rely upon six monthly returns datasets from the Kenneth French library that are extensively used as benchmarks in the literature to compare portfolio strategies (see e.g. \hyperref[DeMiguel]{DeMiguel et al. 2009a,b}, \hyperref[Behr]{Behr et al. 2013} and \hyperref[Ardia]{Ardia et al. 2017}). The datasets are listed on Table \ref{ListBenchmark}.

\begin{table*}
\centering
\caption{List of datasets considered in the empirical study. \textit{Notes:} All datasets are made of monthly returns. The value weighting scheme is considered for the industry portfolios. Source: Kenneth French library.}
\label{ListBenchmark}
\vspace{-0.2cm}
\begin{tabular}{@{\hskip 0.15cm}l@{\hskip 0.3cm}c@{\hskip 0.35cm}cc@{\hskip 0.15cm}}
\hline \\[-0.35cm]
Datasets                                                                & Abb.   & $n$ & Time period           \\ \hline \\[-0.3cm]
6 Fama-French portfolios of firms sorted by size and book-to-market  & \textit{6BTM}   & 6   & 07/1963 - 06/2016  \\
25 Fama-French portfolios of firms sorted by size and book-to-market & \textit{25BTM}  & 25  & 07/1963 - 06/2016 \\
6 Fama-French portfolios of firms sorted by size and momentum  & \textit{6Mom}   & 6   & 07/1963 - 06/2016  \\
25 Fama-French portfolios of firms sorted by size and momentum & \textit{25Mom}  & 25  & 07/1963 - 06/2016 \\
10 industry portfolios representing the US stock market                 & \textit{10Ind} & 10  & 07/1963 - 06/2016 \\
17 industry portfolios representing the US stock market                 & \textit{17Ind} & 17  & 07/1963 - 06/2016 \\ \hline
\end{tabular}
\vspace{0cm}
\end{table*}

\subsubsection{Dynamic rebalancing}
We construct the portfolios by dynamic rebalancing. Rebalancing the weights not too frequently is important to ensure a satisfactory performance and turnover (\hyperref[Carroll]{Carroll et al. 2017}), hence we set a rolling window of one year as in \hyperref[Behr]{Behr et al. (2013)}. The estimation window is set to ten years, i.e. $T=120$. We use as starting date 07/1963 as in \hyperref[DeMiguel1]{DeMiguel et al. (2009b)} and \hyperref[Behr]{Behr et al. (2013)}, and rebalance the portfolios until 06/2016. This represents an out-of-sample period of 43 years.

\subsubsection{Optimization}\label{Optimization}

As pointed out in Section \ref{DefinitionROpt}, the MRE optimization program is, generally speaking, not a convex problem. Therefore, to find the optimal weights, we rely on the global optimizer of \hyperref[Ugray]{Ugray et al. (2007)} based on the Nelder-Mead algorithm. By doing so, we minimize the risk of getting stuck in a local minimum. We observe on the different datasets that recompiling the optimization several times yields essentially indistinguishable solutions, outlining that non-convexity is not a major issue.

\subsubsection{Choice of $m$}

As pointed out in Section \ref{Robustness}, the value of $m$ for the $m$-spacings estimator is of paramount importance as it determines the robustness of the MRE portfolio. To illustrate this, Table \ref{TurnoverPlot} reports, for the \textit{10Ind} dataset, the time evolution of the MRE optimal weights for $\alpha=0.5$ and $m=2,8,24$. These are unconstrained weights, i.e. corresponding to $\mathcal{W}=\{ \boldsymbol{w}\in \mathbb{R}^n \ | \ \boldsymbol{1}_n'\boldsymbol{w}=1 \}$. The weights of the M-portfolio are also reported for comparison. One can clearly observe that increasing $m$ improves the stability of the optimal weights obtained.

\begin{figure*}[ht]
\centering
\caption{Increasing $m$ improves the stability of the MRE optimal weights. \textit{Notes:} We report, for the \textit{10Ind} dataset, the time evolution of optimal weights for the M-portfolio and MRE portfolio with $\alpha=0.5$ and $m=2,8,24$. The weights are unconstrained, i.e. $\mathcal{W}=\{ \boldsymbol{w}\in \mathbb{R}^n \ | \ \boldsymbol{1}_n'\boldsymbol{w}=1 \}$.}
\vspace{-0.3cm}
\hspace{1.45cm}\includegraphics[width=0.9\textwidth]{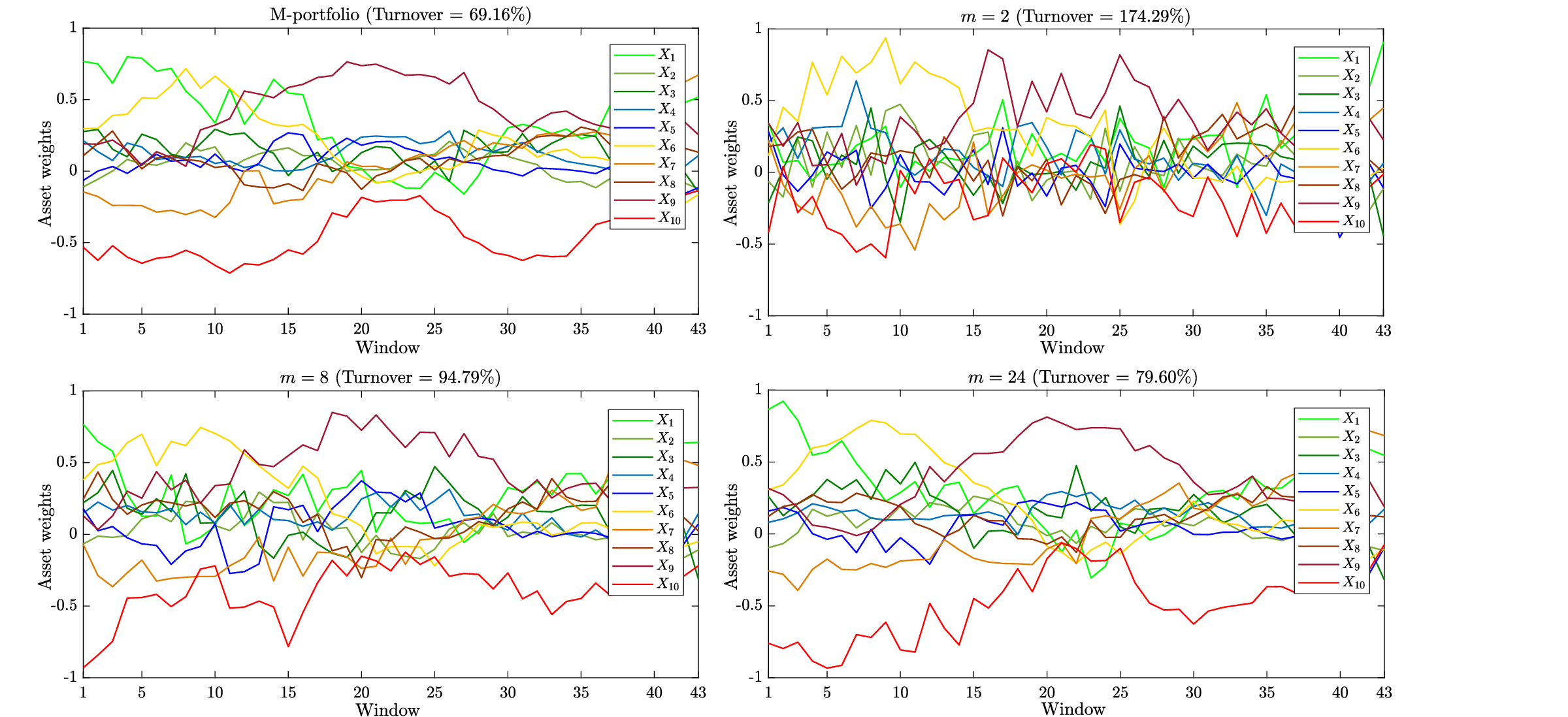}
\label{TurnoverPlot}
\vspace{0cm}
\end{figure*}

As a first strategy, we have considered the \textit{leave-one-out cross validation} method, using as criteria maximum return, minimum variance and maximum Sharpe ratio. However, the results obtained were quite poor both in terms of performance and turnover. Allowing $m$ to change at each rolling window seems to add instability to the procedure and is not recommended.

Therefore, as a second strategy, we have used the simple rule-of-thumb $m=\big[T^{2/3}\big]=24$ which works well on the considered datasets. We observe actually that once $m$ is high enough, the results display only a very minor sensitivity to the specific value of $m$ that is chosen, so that one can set $m=24$ without fearing that another different but close value would yield dissimilar results.\footnote{Specifically, the results in Table \ref{TableResultsPortfolios} for $m=18$ and $m=35$ yield a very similar performance. The only changes are in terms of turnover, which is higher for $m=18$ and nearly identical for $m=35$.}

\subsubsection{Weight constraints}\label{WConst}

To alleviate the estimation error, it is common in portfolio optimization to restrict the solution space $\mathcal{W}$. This has the effect of improving the stability of the optimal weights obtained and in turn the portfolios' out-of-sample performance. This is easily understood from Table \ref{TurnoverPlot}, in which all the portfolios feature a significant turnover in the unconstrained case, even though $n=10$ is relatively low in comparison to the sample size $T=120$. 

Therefore, we optimize the different portfolios subject to a constraint on the weights. We implement the global
variance-based constraint (GVBC) devised by \hyperref[Levy]{Levy and Levy (2014)}, which reads as
\begin{equation}
\sum_{i=1}^n \bigg(w_i-\frac{1}{n}\bigg)^2\frac{\sigma_i}{\bar{\sigma}} \leq \delta\;,
\end{equation}
where $\sigma_i := \sqrt{\mathbb{V}\text{ar}(X_i)}$ and $\bar{\sigma}:=\frac{1}{n}\sum_{i=1}^n \sigma_i$. The underlying rationale of GVBC is ``to impose more stringent constraints on stocks with relatively high standard deviations, as the estimation errors for these stocks’ parameters, and hence the potential economic loss, are larger than for stocks with relatively low standard deviations'' (\hyperref[Levy]{Levy and Levy 2014 p.375}). Using a U.S. industry portfolio dataset, the authors observe largely improved out-of-sample Sharpe ratios compared to several robust portfolio selection strategies for a wide range of values of $\delta$. In particular, the results are stable for $\delta$ between 10\% and 25\%. In the sequel, we set $\delta$ at the higher hand, i.e. $\delta=25\%$, as too low values make it difficult to distinguish between the different portfolios as they are too close to the equally-weighted one (corresponding to $\delta = 0$). The conclusions of the empirical study remain the same for $\delta=20\%$ and $\delta=15\%$, though naturally less strikingly.


\subsubsection{Performance measures}

We measure the portfolios' out-of-sample performance and stability with three criteria:
\begin{enumerate}
\item The Sharpe ratio, defined as
\begin{equation}
SR:=(\mathbb{E}(P)-r_f)/\sqrt{\mathbb{V}\text{ar}(P)}
\end{equation}
and estimated using sample estimators, which is the most common performance measure used in the asset allocation literature. For simplicity, we assume that $r_f=0$, as in e.g. \hyperref[DeMiguel1]{DeMiguel et al. (2009b)}, i.e. we report the reciprocal of the coefficient of variation.
\item Given that the appeal of the MRE portfolio compared to the minimum variance one is to account for higher-order uncertainty, using the Sharpe ratio alone is not sufficient to assess the merit of our portfolio policy. Hence, we also report the adjusted Sharpe ratio of \hyperref[Pezier]{Pézier (2004)} that accounts for investors' higher-moment preferences, defined as
\begin{equation}
ASR := SR\bigg(1+\frac{\mathbb{S}\text{kew}(P)}{3!}SR - \frac{\mathbb{K}\text{urt}(P)}{4!}SR^2\bigg)\;,
\end{equation}
that we estimate using sample moment estimators. 
\item To assess the stability and associated transaction costs of the portfolios, we report the turnover, defined as usual as
\begin{equation}
\text{Turnover} := \frac{1}{R-1} \sum_{t=1}^R \sum_{i=1}^n |w_{i,t+1}-w_{i,t^+}|\;,
\end{equation}
where $R=43$ is the number of rebalancing periods, $w_{i,t+1}$ is the desired weight of asset $i$ at time $t+1$ and $w_{i,t^+}$ is its weight before rebalancing at $t+1$.
\end{enumerate}
All three criteria are expressed in annual terms.

\subsection{Out-of-sample results}\label{EmpiricalResults}

\begin{table*}[ht]
\small
\centering
\caption{Out-of-sample Sharpe ratios, adjusted Sharpe ratios and turnover of the MRE and MV portfolios on the six datasets. \textit{Notes:} The portfolios are constructed with the methodology presented in Section \ref{Methodo}.}
\label{TableResultsPortfolios}
\vspace{-0.15cm}
\begin{tabular}{@{\hskip 0.08cm}l@{\hskip 0.50cm}c@{\hskip 0.1cm}c@{\hskip 0.1cm}c@{\hskip 0.15cm}c@{\hskip 0.15cm}c@{\hskip 0.15cm}c@{\hskip 0.65cm}c@{\hskip 0.30cm}c@{\hskip 0.30cm}c@{\hskip 0.30cm}c@{\hskip 0.30cm}c@{\hskip 0.08cm}}
\hline \\[-0.3cm]
\multicolumn{11}{c}{\textit{Sharpe ratio (SR)}}                                                                             \\
               & \multicolumn{6}{c}{\hspace{-0.45cm}\textit{MRE portfolios}}                                                     & \multicolumn{5}{c}{\hspace{-0.15cm}\textit{MV portfolios}}                                                                                                                                                  \\
               & $\alpha = 0.3$ & $\alpha = 0.5$ & $\alpha = 0.7$ & $\alpha = 1$ & $\alpha = 1.5$ & $\alpha = 2$ &  $\widehat{\Sigma}$ &$\widehat{\Sigma}_{CC}$   & $\widehat{\Sigma}_{SF}$   & $\widehat{\Sigma}_{I}$  & MP    \\
\textit{6BTM}  & 0.844          & 0.845          & 0.846          & 0.841        & 0.832          & 0.815  & 0.835     & 0.821                                                      & 0.833                                                      & 0.836                                                     & 0.841 \\
\textit{25BTM} & 0.985          & 0.980          & 0.973          & 0.961        & 0.937          & 0.906     & 0.953   & 0.913                                                      & 0.951                                                      & 0.957                                                     & 0.956 \\
\textit{6Mom}  & 0.768          & 0.763          & 0.753          & 0.750        & 0.716          & 0.700  & 0.738      & 0.741                                                      & 0.738                                                      & 0.737                                                     & 0.731 \\
\textit{25Mom} & 0.936          & 0.948          & 0.957          & 0.960        & 0.954          & 0.928   & 0.902     & 0.920                                                      & 0.904                                                      & 0.903                                                     & 0.916 \\
\textit{10Ind} & 0.995          & 1.001          & 1.014          & 1.013        & 0.971          & 0.936      & 0.977  & 0.970                                                      & 0.983                                                      & 0.973                                                     & 0.976 \\
\textit{17Ind} & 0.938          & 0.947          & 0.936          & 0.965        & 0.905          & 0.891    & 0.936    & 0.924                                                      & 0.935                                                      & 0.935                                                     & 0.918 \\ \textit{\textbf{Average}}   & \textbf{0.911} & \textbf{0.914} & \textbf{0.913} & \textbf{0.915} & \textbf{0.886} & \textbf{0.863} & \textbf{0.890}& \textbf{0.882}                                             & \textbf{0.891}                                             & \textbf{0.890}                                            & \textbf{0.890} \\ \hline \\[-0.3cm]
\multicolumn{11}{c}{\textit{Adjusted Sharpe ratio (ASR)}}                                                                                                                                                                                                                                                      \\
                & \multicolumn{6}{c}{\hspace{-0.45cm}\textit{MRE portfolios}}                                                     & \multicolumn{5}{c}{\hspace{-0.15cm}\textit{MV portfolios}}                                                                                                                                            \\
               & $\alpha = 0.3$ & $\alpha = 0.5$ & $\alpha = 0.7$ & $\alpha = 1$ & $\alpha = 1.5$ & $\alpha = 2$ & $\widehat{\Sigma}$  &$\widehat{\Sigma}_{CC}$   & $\widehat{\Sigma}_{SF}$   & $\widehat{\Sigma}_{I}$  & MP    \\
\textit{6BTM}  & 0.829          & 0.830          & 0.831          & 0.826        & 0.816          & 0.800    & 0.822    & 0.809                                                      & 0.821                                                      & 0.824                                                     & 0.826 \\
\textit{25BTM} & 0.969          & 0.963          & 0.956          & 0.943        & 0.919          & 0.889     & 0.940   & 0.906                                                      & 0.939                                                      & 0.944                                                     & 0.940 \\
\textit{6Mom}  & 0.759          & 0.753          & 0.744          & 0.741        & 0.708          & 0.694    & 0.730    & 0.733                                                      & 0.729                                                      & 0.729                                                     & 0.723 \\
\textit{25Mom} & 0.921          & 0.934          & 0.943          & 0.947        & 0.943          & 0.918    & 0.889    & 0.909                                                      & 0.892                                                      & 0.890                                                     & 0.902 \\
\textit{10Ind} & 0.990          & 0.995          & 1.005          & 1.004        & 0.965          & 0.927    & 0.973    & 0.966                                                      & 0.979                                                      & 0.968                                                     & 0.972 \\
\textit{17Ind} & 0.950          & 0.959          & 0.945          & 0.975        & 0.911          & 0.901  & 0.950      & 0.940                                                      & 0.950                                                      & 0.949                                                     & 0.929 \\ \textit{\textbf{Average}}   & \textbf{0.903} & \textbf{0.906} & \textbf{0.904} & \textbf{0.906} & \textbf{0.877} & \textbf{0.855} & \textbf{0.884} & \textbf{0.877}                                             & \textbf{0.885}                                             & \textbf{0.884}                                            & \textbf{0.882} \\ \hline \\[-0.3cm]
\multicolumn{11}{c}{\textit{Turnover}}                                                                                                                                                                                                                                                                         \\
               &  \multicolumn{6}{c}{\hspace{-0.45cm}\textit{MRE portfolios}}                                                     & \multicolumn{5}{c}{\hspace{-0.15cm}\textit{MV portfolios}}                                                                                                                                         \\
               & $\alpha = 0.3$ & $\alpha = 0.5$ & $\alpha = 0.7$ & $\alpha = 1$ & $\alpha = 1.5$ & $\alpha = 2$ & $\widehat{\Sigma}$ &$\widehat{\Sigma}_{CC}$   & $\widehat{\Sigma}_{SF}$   & $\widehat{\Sigma}_{I}$  & MP    \\
\textit{6BTM}  & 0.184          & 0.182          & 0.183          & 0.189        & 0.218          & 0.266    & 0.171    & 0.167                                                      & 0.170                                                      & 0.168                                                     & 0.168 \\
\textit{25BTM} & 0.499          & 0.535          & 0.616          & 0.966        & 1.356          & 1.502   & 0.448     & 0.464                                                      & 0.443                                                      & 0.445                                                     & 0.478 \\
\textit{6Mom}  & 0.167          & 0.171          & 0.180          & 0.191        & 0.247          & 0.284  & 0.168      & 0.168                                                      & 0.167                                                      & 0.168                                                     & 0.178 \\
\textit{25Mom} & 0.437          & 0.444          & 0.523          & 0.635        & 0.860          & 1.200    & 0.417    & 0.413                                                      & 0.410                                                      & 0.416                                                     & 0.422 \\
\textit{10Ind} & 0.348          & 0.350          & 0.378          & 0.450        & 0.600          & 0.776   & 0.283     & 0.276                                                      & 0.274                                                      & 0.286                                                     & 0.321 \\
\textit{17Ind} & 0.526          & 0.568          & 0.685          & 0.825        & 1.033          & 1.246    & 0.449    & 0.422                                                      & 0.433                                                      & 0.452                                                     & 0.467 \\ \textit{\textbf{Average}}   & \textbf{0.360} & \textbf{0.375} & \textbf{0.427} & \textbf{0.542} & \textbf{0.719} & \textbf{0.879} & \textbf{0.323} & \textbf{0.318}                                             & \textbf{0.316}                                             & \textbf{0.322}                                            & \textbf{0.339} \\ \hline
\end{tabular}
\vspace{0cm}
\end{table*}

The results are reported on Table \ref{TableResultsPortfolios}. Several interesting observations can be made. 

First, comparing the six MRE portfolios, one can clearly see that $\alpha=1.5$ and $\alpha=2$ yield by far the worst performance. This is consistent with the Gram-Charlier expansion in \eqref{GCApprox}, where setting $\alpha>1$ favors solutions with more kurtosis (as $k_1(\alpha)<0$ in that case). Therefore, both from a theoretical and empirical perspective, it is recommended to set $\alpha \leqslant 1$, as argued in Section \ref{AlphaZeroOne}.

Second, the turnover of the MRE portfolio systematically increases with $\alpha$. It does not increase too much from $\alpha=0.3$ to $\alpha=0.7$ but then quickly increases dramatically. This effect can be explained by the fact that the convexity of $H_\alpha^{\exp}(P)$ (as a function of $\boldsymbol{w}$) decreases as $\alpha$ increases. This is for example visible in Figure \ref{EffectOfKurtosis}. This makes that the minimum is more bound to change largely from one rolling window to another as the objective function is flatter around the minimum the higher the $\alpha$.

Third, it is appealing to observe that, for $\alpha\in\{0.3,0.5,0.7,1\}$, the performance of the MRE portfolio is very stable with respect to $\alpha$. This is easily observed by looking at the average $SR$ and $ASR$ across the six datasets. This parameter robustness is an appealing behaviour for the decision-maker. Combined with the fact that the turnover increases with $\alpha$, this means that choosing $\alpha$ quite low, in this case around $\alpha=0.3$, yields the best trade-off between risk, return and turnover.\footnote{For completeness, we have checked the results for $\alpha\in\{0.05,0.1,0.2\}$ as well. The turnover barely decreases compared to $\alpha=0.3$, and the Sharpe ratio measures remain nearly identical.}

Fourth, comparing the MRE and MV portfolios, one can definitely observe that the MRE portfolios improve upon the MV portfolios both in terms of $SR$ and $ASR$, except naturally for $\alpha=1.5$ and $\alpha=2$ that we discard in the discussion below. Indeed, averaging across the six datasets, each MRE portfolio displays larger $SR$ and $ASR$ than \textit{all} the MV portfolios. In terms of turnover however, all the MRE portfolios display less stability than the MV portfolios. This is to be expected as the MRE portfolio is sensitive to the higher-order moments, which are more affected by outliers than the variance. That said, for $\alpha=0.3$ and $\alpha=0.5$, the increase in turnover is quite modest (around 4 percentage points on average for $\alpha=0.3$).

Therefore, we conclude that Rényi entropy provides a better risk criterion than the variance in an asset allocation context, especially for low values of $\alpha$, specifically around $\alpha=0.3$ for the datasets considered here.

\section{Conclusion}\label{Conclusion}
Many studies from the wide scientific literature suggest that minimum risk portfolios exhibit solid out-of-sample performances in spite of the fact that there is no target return constraint. Whereas variance \-– initially introduced by Markowitz \-– is a natural risk measure in a Gaussian framework, it fails to capture extreme events that arguably arise in real applications. In order to take this reality into account, various alternative risk measures have been put forward. 

In this article, we have proposed a natural uncertainty measure \-- the exponential Rényi entropy \-- as a higher-moment criterion for portfolio selection. Rényi entropy generalizes Shannon entropy, yielding a set of uncertainty measures. Its parameter $\alpha$ enables to tune the relative contributions of the central and tail parts of the distribution in the measure. Its exponential transform fulfills desirable properties as it is closely related to the class of deviation risk measures, as well as to measures of spread for $\alpha\in[0,1]$. 

Minimizing this measure yields the minimum Rényi entropy portfolio. A Gram-Charlier expansion shows that this portfolio represents a higher-moment extension of the minimum variance portfolio, with $\alpha$ controlling the trade-off between variance and kurtosis minimization. 

In practical settings, the empirical study has demonstrated that the minimum Rényi entropy portfolio fares better out-of-sample compared to state-of-the-art robust minimum variance portfolios in terms of trading off risk, return and turnover, especially for $\alpha$ close to zero.

Beyond our application, this article points the appeal of Rényi entropy in various operations research problems as a powerful way of capturing higher-moment uncertainty, and of using entropy as an optimization criterion rather than just an ad hoc evaluation measure. In the particular case of portfolio selection, Rényi entropy has been shown to be a powerful alternative to existing risk criteria, opening the door to other applications. For instance, one may apply Rényi entropy to the \textit{risk parity} strategy, which raises the challenge of computing the assets' contributions to the portfolio return's exponential Rényi entropy. 

\section*{Acknowledgements}
The authors are grateful to Kris Boudt and Victor DeMiguel for stimulating discussions. The authors also thank participants of the \textit{Actuarial and Financial Mathematics 2018 Conference}, the \textit{35th Annual Conference of the French Finance Association (AFFI)} and the \textit{2018 Belgian Financial Research Forum} for their comments and suggestions. This work was supported by the Fonds National de la Recherche Scientifique (F.R.S.-FNRS) [grant number FC 17775].

\appendix
\section*{Appendices}
\renewcommand{\thesubsection}{\Alph{subsection}}
\renewcommand{\thetable}{\Alph{subsection}.\arabic{table}}
\setcounter{table}{0}
\renewcommand{\theequation}{\Alph{subsection}.\arabic{equation}}
\renewcommand{\theremark}{\Alph{subsection}.\arabic{remark}}
\numberwithin{prop}{subsection}
\numberwithin{equation}{subsection}
\numberwithin{remark}{subsection}
\counterwithin{figure}{subsection}
\counterwithin{table}{subsection}

\subsection{Counter-examples to sub-additivity of $H_\alpha^{\exp}$}\label{AppendixSubAdd}

In this section, we report the three counter-examples to sub-additivity as mentioned in Proposition \ref{PropSubAdd}.

\subsubsection{Lévy distributions}

\begin{prop*}
$H_1^{\exp}$ is not sub-additive for a pair $(X,Y)$ of independent Lévy-distributed random variables.
\end{prop*}
\begin{proof}
The pdf of $Z\sim\text{Lévy}(\mu,\sigma)$ is given by $f_Z(x)=\sqrt{\frac{\sigma}{2\pi}}\frac{e^{-\frac{\sigma}{2(x-\mu)}}}{(x-\mu)^{3/2}}$ and is strictly positive for $x>\mu$. Its exponential entropy is known in closed-form (see \hyperref[Zografos]{Zografos and Nadarajah 2003}): $H_1^{\exp}(Z) = 4\sigma\sqrt{\pi}e^{\frac{1+2\gamma}{2}}$, where $\gamma\approx 0.577$ is the Euler-Mascheroni constant. We note the parameters of $X,Y$ as $(\mu_X,\sigma_X)$ and $(\mu_Y,\sigma_Y)$, respectively, with $\sigma_X,\sigma_Y>0$. As Lévy is a stable law, the sum $X+Y$ is again Lévy-distributed with parameters
$\mu_{X + Y}=\mu_X + \mu_Y$ and  
$\sigma_{X+Y}=\sigma_X+\sigma_Y+2\sqrt{\sigma_X\sigma_Y}$. Sub-additivity is thus equivalent to $\sigma_{X+Y} \leqslant \sigma_X + \sigma_Y\Leftrightarrow 2\sqrt{\sigma_X}\sqrt{\sigma_Y} \leqslant 0$, which never holds when $\sigma_X,\sigma_Y > 0$. 
\end{proof}

\subsubsection{Bimodal distributions}

\begin{prop*}
Consider $(Z_X,Z_Y)$ a pair of independent standard Normal variables and $(U_X,U_Y)$ a pair of independent Bernoulli variables of parameter $1/2$, independent from both $Z_Y,Z_Y$. Define $X:=(2\mu_X U_X-\mu_X)+\sigma Z_X,Y:=(2\mu_Y U_Y-\mu_Y)+\sigma Z_Y$ with constants $\mu_X,\mu_Y$ and $\sigma>0$. Then, $H_1^{\exp}$ is not sub-additive for the pair $(X,Y)$ whenever e.g. $(\mu_X,\mu_Y)=(1,2)$ and $\sigma<0.3918$.
\end{prop*}
\begin{proof}
Noting $\phi(x)$ the standard Gaussian density, the marginal densities of $X,Y$ are given by the Gaussian mixtures
\begin{equation}\label{DensityXY}
\begin{aligned}
&f_X(x) = \frac{1}{2\sigma}\bigg(\phi\Big(\frac{x+\mu_X}{\sigma}\Big) + \phi\Big(\frac{x-\mu_X}{\sigma}\Big)\bigg) \;, \\ &f_Y(x) = \frac{1}{2\sigma}\bigg(\phi\Big(\frac{x+\mu_Y}{\sigma}\Big) + \phi\Big(\frac{x-\mu_Y}{\sigma}\Big)\bigg)\;.
\end{aligned}
\end{equation}
It is easy to show (see e.g. \hyperref[Pham]{Pham and Vrins 2005}) that the density of $X+Y$ is
\begin{equation}\label{DensityXPlusY}
\begin{aligned}
f_{X+Y}(x) = &\frac{1}{4\sigma\sqrt{2}}\bigg(\phi\Big(\frac{x+\mu_X+\mu_Y}{\sigma\sqrt{2}}\Big) + \phi\Big(\frac{x+\mu_X-\mu_Y}{\sigma\sqrt{2}}\Big) \\
&+ \phi\Big(\frac{x-\mu_X+\mu_Y}{\sigma\sqrt{2}}\Big) + \phi\Big(\frac{x-\mu_X-\mu_Y}{\sigma\sqrt{2}}\Big) \bigg)\;.
\end{aligned}
\end{equation}
Because $H_1$ only depends on the density $f_X$ of $X$, we denote $H_1[f_X]:=H_1(X)$.
\hyperref[Vrins]{Vrins et al. (2007)} show that, for a random variable $Z$ whose density can be written in the form $f_Z(x)=\sum_{n=1}^N \pi_n K_n(x)$ with positive weights $\pi_n$ summing to 1 and Gaussian kernels $K_n(x) = \frac{1}{\sigma_n}\phi\big(\frac{x-\mu_n}{\sigma_n}\big)$, then $H_1(Z)$ can be bounded below and above. More explicitly,
\begin{equation}
\underline{H}_1(Z) \leqslant H_1(Z) \leqslant \overline{H}_1(Z)\;,
\end{equation}
with 
\begin{equation}\label{FormulaBounds}
\begin{aligned}
&\overline{H}_1(Z) := \sum_{n=1}^N \pi_n H_1[K_n] + h(\boldsymbol{\pi}) \;, \\ 
&\underline{H}_1(Z) := \overline{H}_1(Z) - \sum_{n=1}^N \pi_n\epsilon_n' - \sum_{n=1}^N \pi_n\bigg[\ln\bigg(\frac{s}{\pi_n s_n}\bigg) +1\bigg]\epsilon_n\;,
\end{aligned}
\end{equation}
where $h(\boldsymbol{\pi}): =-\sum_{n=1}^N \pi_n \ln \pi_n$, $s_n:=\max_x K_n(x)=\big(\sqrt{2\pi}\sigma_n\big)^{-1}$ and $s:=\max_n s_n$. Rearranging the $\mu_n$ by increasing order and defining $d_n := \min(\mu_n-\mu_{n-1},\mu_{n+1}-\mu_n)$ with $\mu_0 :=-\infty$, $\mu_{N+1}:=\infty$ by convention, we have
\begin{equation}
\epsilon_n := \text{Erfc}\bigg(\frac{d_n}{2\sqrt{2}\sigma_n}\bigg)  \ \ , \ \ 
\epsilon_n' := \frac{1}{2}\epsilon_n + \frac{d_n}{2\sqrt{2\pi}\sigma_n}e^{-\big(\frac{d_n}{2\sqrt{2}\sigma_n}\big)^2}\;,
\end{equation}
with the complementary error function $\text{Erfc}(x):=2\Phi(-x\sqrt{2})$, where $\Phi$ is the standard Gaussian cdf.

Using these lower and upper bounds, the $H_1^{\exp}$ operator fails to be sub-additive for the pair $(X,Y)$ if
\begin{equation}\label{ConditionBounds}
\exp(\underline{H}_1(X+Y)) > \exp(\overline{H}_1(X)) + \exp(\overline{H}_1(Y))\;.
\end{equation}
Indeed, the LHS is a lower-bound to $H_1^{\exp}(X+Y)$ while the RHS is an upper-bound to $H_1^{\exp}(X)+H_1^{\exp}(Y)$. Setting $(\mu_X,\mu_Y)=(\mu,2\mu)$, the bounds in \eqref{FormulaBounds} applied to the densities in \eqref{DensityXY}-\eqref{DensityXPlusY} read as
\begin{equation}
\begin{aligned}
\overline{H}_1(X) = \overline{H}_1(&Y) = \ln\big(2\sigma\sqrt{2\pi e}\big) \;, \\
\underline{H}_1(X+Y) = &\ln\big(8\sigma\sqrt{\pi e}\big)-\text{Erfc}(\mu/2\sigma)(3/2+\ln(4)) \\  &- \frac{\mu}{2\sigma\sqrt{\pi}}e^{-(\mu/2\sigma)^2}\;,
\end{aligned}
\end{equation}
from which we find for example that, setting $\mu=1$, \eqref{ConditionBounds} holds as long as $0<\sigma<0.3918$.
\end{proof}

\subsubsection{Comonotonic random variables}

\begin{prop*}
$H_1^{\exp}$ is not sub-additive for a comonotonic pair $(X,Y)$, with $Y=F(X)$, $F'(X)\sim Exp(1)$. 
\end{prop*}
\begin{proof}
Two random variables $X,Y$ are comonotonic when $Y$ can be written as $F(X)$ where $F$ is a continuous strictly increasing function. Denote $G(x) := x + F(x)$, which is also strictly increasing and so invertible, and denote $H(x)$ its inverse. Then, the cdf of $X+Y=G(X)$ is given by
\begin{equation*}
F_{X+Y}(x) = \mathbb{P}(G(X) \leqslant x) = \mathbb{P}(X \leqslant H(x))=F_X(H(x))\;,
\end{equation*}
and its pdf reads
\begin{equation*}
f_{X+Y}(x) = \frac{f_X(H(x))}{G'(H(x))}\;.
\end{equation*}
As a result, $H_1^{\exp}(X+Y)$ becomes
\begin{equation*}
H_1^{\exp}(X+Y) = \exp \Bigg(-\int \frac{f_X(H(x))}{G'(H(x))} \ln\frac{f_X(H(x))}{G'(H(x))} \hspace{0.05cm} dx\Bigg)\;.
\end{equation*}
A change of variable $z=H(x)$ and algebraic manipulations lead to
\begin{equation*}
H_1^{\exp}(X+Y) = H_1^{\exp}(X) \hspace{0.05cm} \exp \big( \mathbb{E}\big(\ln (1+F'(X))\big)\big)\;.
\end{equation*}
Based on a similar reasoning, we can show that
\begin{equation*}
H_1^{\exp}(Y) = H_1^{\exp}(X) \hspace{0.05cm} \exp \big( \mathbb{E}\big(\ln F'(X)\big)\big)\;,
\end{equation*}
meaning that sub-additivity amounts to showing that 
\begin{equation}\label{additivitycondition}
\exp\big(\mathbb{E}\big(\ln(1+F'(X))\big)\big) \leqslant 1 + \exp\big(\mathbb{E}\big(\ln F'(X)\big)\big)\;.
\end{equation}

Let us now show instead a counter-example to \eqref{additivitycondition} where the left-hand side is higher than the right-hand side, i.e. where $H_1^{\exp}$ is super-additive. Because $F'(X)$ has to be a positive random variable ($F$ is strictly increasing), we consider $F'(X):=\zeta\sim Exp(1)$. Define $W:=\ln(1+\zeta)$ and $Z:=\ln\zeta$. To have super-additivity, we have to show that 
\begin{equation}\label{additiveExp}
e^{\mathbb{E}(W)} > 1+e^{\mathbb{E}(Z)}\;.
\end{equation}
One can find the pdf of $W$ and $Z$ to be given by
\begin{gather}
\label{pdfW} f_W(x) = e^{1+x-e^x} \ , \ x \geqslant 0\;, \\
\label{pdfZ} f_Z(x) = e^{x-e^x} \ , \ x \in \mathbb{R} \;.
\end{gather}
We can now compute the expectations. From \eqref{pdfW}, $\mathbb{E}(W)$ is given by the following integral:
\begin{equation*}
\begin{aligned}
\mathbb{E}(W)
&= e \int_0^{\infty} xe^{x-e^x}dx\;.
\end{aligned}
\end{equation*}
A change of variable $z=e^x$ and integration by parts yields
\begin{equation*}
\mathbb{E}(W) = e \hspace{0.05cm} \Gamma(0,1) \approx 0.596\;,
\end{equation*}
where $ \Gamma(s,x)=\int_x^\infty t^{s-1}e^{-t}dt$ is the upper incomplete Gamma function. A similar derivation yields $\mathbb{E}(Z)=-\gamma \approx -0.577$, i.e. minus the Euler-Mascheroni constant. Finally, we have in agreement with \eqref{additiveExp} that
\begin{equation*}
e^{\mathbb{E}(W)} = e^{e \hspace{0.05cm} \Gamma(0,1)} \approx 1.815 > 1.561 \approx 1+e^{-\gamma} = 1+e^{\mathbb{E}(Z)}\;,
\end{equation*}
hence providing a counter-example to sub-additivity. 
\end{proof}

\subsection{Proofs of Propositions}



\subsubsection{Proof of Proposition \ref{SubAddElliptical}} \label{ProofElliptical}

\begin{proof}
We set $\boldsymbol{\mu}=\boldsymbol{0}$ without loss of generality as $H_\alpha^{\exp}$ is translation-invariant.  The proof relies on the special convolution properties of elliptical distributions, see e.g. \hyperref[Fang]{Fang and Zhang (1990)}.  In particular, any linear combination of an elliptical random vector remains elliptical, meaning that we can write
\begin{equation*}
X \sim \text{El}(\sigma_X^2,g_1) \ \ , \ \ 
Y \sim \text{El}(\sigma_Y^2,g_1)\ \ , \ \
X+Y \sim \text{El}(e'\Sigma e,g_1),
\end{equation*}
where $e=(1,1)'$, and so $e'\Sigma e = \sigma_X^2 + \Sigma_Y^2 + 2\rho\sigma_X\sigma_Y$. Moreover, any elliptical distribution $\boldsymbol{X}$ can be written as $\boldsymbol{X} = \boldsymbol{\mu} + A\boldsymbol{Y}$, where $AA'=\Sigma$ and $\boldsymbol{Y}$ is a spherical distribution, i.e. an elliptical distribution with $\Sigma = I$, the identity matrix. Applied to our case, this means that we can write $X = \sigma_X Z$, $Y = \sigma_Y Z$ and $X+Y = \sqrt{e'\Sigma e}Z$, with $Z \sim \text{El}(1,g_1)$. Finally, the sub-additivity of $H_\alpha^{\exp}$ reduces to 
\begin{equation*}
\begin{aligned}
&H_\alpha^{\exp}(X+Y) \leqslant H_\alpha^{\exp}(X) + H_\alpha^{\exp}(Y) \\
\Leftrightarrow \hspace{0.15cm} &\sqrt{e'\Sigma e}H_\alpha^{\exp}(Z) \leqslant \sigma_X H_\alpha^{\exp}(Z) + \sigma_Y H_\alpha^{\exp}(Z) \\
\Leftrightarrow \hspace{0.15cm} &\sqrt{\sigma_X^2 + \sigma_Y^2 + 2\rho\sigma_X\sigma_Y} \leqslant \sigma_X + \sigma_Y \;,
\end{aligned}
\end{equation*}
which is true for any $\rho \in [-1,1]$.
\end{proof}

\subsubsection{Proof of Proposition \ref{PropGC}}\label{AppendixGC}

\begin{proof}
The truncated Gram-Charlier (GC) expansion of the pdf of $X'$ is given by
\begin{equation}
f_{X'}(x) \approx \phi(x)\bigg(1+\mathbb{S}\textup{kew}(X)\frac{H_3(x)}{3!}+\mathbb{K}\textup{urt}(X)\frac{H_4(x)}{4!}\bigg)\;.
\end{equation}
The proof relies on special properties of the Hermite polynomials $H_i$'s. Those are defined in relation with the derivatives of the standard Gaussian pdf $\phi$:
\begin{equation*}
\frac{\partial^i \phi(x)}{\partial x^i} = (-1)^i H_i(x) \phi(x)\;.
\end{equation*}
The first four polynomials are given by $H_1(x)=x$, $H_2(x)=x^2-1$, $H_3(x)=x^3-3x$ and $H_4(x)=x^4-6x^2+3$. They form an orthonormal system in the sense that
\begin{equation*}
 \int H_i(x)H_j(x)\phi(x)dx = \left\{
                \begin{array}{ll}
                  i! \ \ \text{if} \ i=j \\
                  0 \ \ \hspace{0.05cm} \text{if} \ i \neq j 
                  
                \end{array}\;.
              \right.
\end{equation*}
To find the GC expansion of $H_\alpha(X)$, we want to have a similar system for the $\alpha^{\text{th}}$ power of $\phi$. One can check that $\phi^\alpha(x) = k \phi(\sqrt{\alpha}x)$ with $k=(2\pi)^{\frac{1-\alpha}{2}}$ and that 
\begin{equation*}
\frac{\partial^i \phi(\sqrt{\alpha}x)}{\partial x^i} = (-1)^i \widetilde{H}_i(x)\phi(\sqrt{\alpha}x)\;,
\end{equation*}
with $\widetilde{H}_1(x)=\alpha x$, $\widetilde{H}_2(x)=\alpha^2 x^2 - \alpha$, $\widetilde{H}_3(x)=\alpha^3 x^3 - 3\alpha^2 x$ and $\widetilde{H}_4(x)=\alpha^4 x^4 - 6\alpha^3 x^2 + 3\alpha^2$. Hence, in relation with the original polynomials $H_i$'s, $\phi(\sqrt{\alpha}x)$ forms the system
\begin{equation*}
 \int H_i(x)H_j(x)\phi(\sqrt{\alpha}x)dx = \left\{
                \begin{array}{ll}
                  C_i(\alpha) \ \ \text{if} \ i=j \\
                  0 \ \ \ \ \ \ \  \hspace{0.11cm} \text{if} \ i \neq j 
                  
                \end{array}\;.
              \right.
\end{equation*}
Following algebraic manipulations, the first four coefficients $C_i(\alpha)$'s express as
\begin{equation*}
\begin{aligned}
C_1(\alpha) &= \alpha^{-3/2}\;, \\
C_2(\alpha) &= \frac{(\alpha-2)\alpha + 3}{\alpha^{5/2}}\;, \\
C_3(\alpha) &= \frac{3(3(\alpha-2)\alpha + 5}{\alpha^{7/2}}\;, \\
C_4(\alpha) &= \frac{3(3\alpha^4 - 12\alpha^3 + 42\alpha^2 - 60\alpha + 35)}{\alpha^{9/2}}\;.
\end{aligned}
\end{equation*}
Let us now first derive the GC expansion of $I_\alpha(X') := \int (f_{X'}(x))^\alpha dx$. Using the results above and the second-order Taylor expansion $(1+\varepsilon)^\alpha \approx 1+\alpha \varepsilon + \frac{\alpha(\alpha-1)}{2}\varepsilon^2$, $I_\alpha(X')$ approximates as
\begin{equation*}
\begin{aligned}
&I_\alpha(X') \approx I\big[\mathcal{N}(0,1)\big] + \frac{3k\alpha(\alpha-1)^2}{4!\alpha^{5/2}} \mathbb{K}\text{urt}(X) \\ &+\hspace{-0.08cm} \frac{k\alpha(\alpha-1)C_3(\alpha)}{2\times 3!^2}\mathbb{S}\text{kew}(X)^2 \hspace{-0.05cm}+\hspace{-0.08cm} \frac{k\alpha(\alpha-1)C_4(\alpha)}{2\times 4!^2} \mathbb{K}\text{urt}(X)^2.
\end{aligned}
\end{equation*}
Note that there is no $\mathbb{S}\text{kew}(X)$ term left because $\int \phi(\sqrt{\alpha}x)H_3(x)dx=0$. Now, to finish, we need to get back to $H_\alpha(X) = \frac{1}{1-\alpha}\ln I_\alpha(X)$. We apply the Taylor expansion
\begin{equation*}
\ln\big(I\big[\mathcal{N}(0,1)\big]+\varepsilon\big) \approx \ln I\big[\mathcal{N}(0,1)\big] + I\big[\mathcal{N}(0,1)\big]^{-1}\varepsilon\;,
\end{equation*}
where $I\big[\mathcal{N}(0,1)\big] = \sqrt{(2\pi)^{1-\alpha}/\alpha}$, which finally yields 
\begin{equation*}
\begin{aligned}
H_\alpha(X) \approx \hspace{0.1cm} &H_\alpha\big[\mathcal{N}(0,\mathbb{V}\textup{ar}(X))\big] + k_1(\alpha) \mathbb{K}\textup{urt}(X) \\
&+ k_2(\alpha) \mathbb{S}\textup{kew}(X)^2 + k_3(\alpha) \mathbb{K}\textup{urt}(X)^2\;,
\end{aligned}
\end{equation*}
where the functions $k_1(\alpha)$, $k_2(\alpha)$ and $k_3(\alpha)$ are made explicit in \eqref{CoeffGC}.
\end{proof}

\subsubsection{Proof of Proposition \ref{BiasIndep}}\label{AppendixBias}

\begin{proof}
For notation purposes, we denote $X_0 := X'$. From \hyperref[Royston]{Royston (1982)}, the density of $X^{(r:T)}$, the $r^{\text{th}}$ order statistics of $X$, writes as
\begin{equation}\label{DensityOrder}
f_{X^{(r:T)}}(x) = (1-F_X(x))^{r-1}(F_X(x))^{T-r}f_X(x)\;.
\end{equation}
As we have $F_X(x)=F_{X_0}\big(\frac{x-\mu}{\sigma}\big)$ and $f_X(x)=f_{X_0}\big(\frac{x-\mu}{\sigma}\big)/\sigma$, we find from \eqref{DensityOrder} that the density of $X^{(r:T)}$ is given by
\begin{equation*}
f_{X^{(r:T)}}(x) = \frac{1}{\sigma} f_{X_0^{(r:T)}}\Big(\frac{x-\mu}{\sigma}\Big)\;,
\end{equation*}
meaning that
\begin{equation}\label{LinkOrder}
X^{(r:T)} \sim \mu + \sigma X_0^{(r:T)}\;.
\end{equation}
Replacing \eqref{LinkOrder} in the expression of $\widehat{H}_\alpha(X;m,T)$, we can write 
\begin{equation}\label{Eq53}
\widehat{H}_\alpha(X;m,T) = \ln \sigma + \widehat{H}_\alpha(X_0;m,T)\;.
\end{equation}
Moreover, as $H_\alpha^{\exp}(X)=\sigma H_\alpha^{\exp}(X_0)$, we have $\ln \sigma = H_\alpha(X) - H_\alpha(X_0)$. Replacing this in \eqref{Eq53} yields 
\begin{equation*}\label{EqComp}
\begin{aligned}
& \widehat{H}_\alpha(X;m,T) =  H_\alpha(X) - H_\alpha(X_0) +  \widehat{H}_\alpha(X_0;m,T) \\
\Leftrightarrow \hspace{0.15cm} & \widehat{H}_\alpha(X;m,T) - H_\alpha(X) = \widehat{H}_\alpha(X_0;m,T) - H_\alpha(X_0) \\  
\Leftrightarrow \hspace{0.15cm} & \mathbb{B}\big(\widehat{H}_\alpha(X;m,T)\big) = \mathbb{B}\big(\widehat{H}_\alpha(X_0;m,T)\big)\;,
\end{aligned}
\end{equation*}
which completes the proof.
\end{proof}

\subsection{Derivation of the $m$-spacings estimator of $H_\alpha^{\exp}$}\label{DerivMSpacings}

This appendix derives the $m$-spacings estimator of $H_\alpha^{\exp}$, whose final expression is reported in Section 4.1.

Consider i.i.d. copies $X^1, X^2, \dots, X^T$ of a continuous random variable $X$. We denote $X^{(1:T)} \leqslant X^{(2:T)} \leqslant \cdots < X^{(T:T)}$ the corresponding order statistics and define the associated $m$-spacings ($1\leqslant m < T$) as the sequence of non-negative differences $X^{(i+m:T)}-X^{(i:T)}$, for $1 \leqslant i \leqslant T-m$. 

In a first step, we build a 1-spacing estimator of $H_\alpha^{\exp}(X)$ because the case $m=1$ has a natural relation to a sample-spacings estimator of the density $f_X$. 

First, recall that the order statistics $Y^{(1:T)},\ldots,Y^{(T:T)}$ of a uniform $\mathcal{U}(0,1)$ random variable $Y$ follow a Beta distribution (\hyperref[Arnold]{Arnold et al. 1992}). In particular, 
\begin{equation*}
\mathbb{E}\big(Y^{(i:T)}\big)  = \frac{i}{T+1}\;.
\end{equation*}

Let us now map $X^1,\ldots,X^T$ through $F_X$ to obtain $T$ $\mathcal{U}(0,1)$ i.i.d. random variables $Y^i:=F_X(X^i)$. Obviously, the sequence $F_X(X^{(1:T)}),\ldots,F_X(X^{(T:T)})$ agrees with the order statistics $Y^{(1:T)},\ldots, Y^{(T:T)}$, leading to:
\begin{equation*}
\mathbb{E}\big(Y^{(i:T)}\big)  =\mathbb{E}\big(F_X\big(X^{(i:T)}\big)\big) = \mathbb{P}\big(X \leqslant X^{(i:T)}\big) = \frac{i}{T+1}\;.
\end{equation*}
Hence, the expected probability mass between two order statistics $X^{(i:T)}\leqslant X^{(i+1:T)}$ is
\begin{equation}\label{ExpArea}
\mathbb{E}\left(\int_{X^{(i:T)}}^{X^{(i+1:T)}} f_X(x) dx\right)= \frac{1}{T+1}\;.
\end{equation}

One can use this key observation to obtain an estimator $\widehat{f}_X$ of $f_X$ being told $T$ order statistics. Indeed, one can thus approximate $f_X(x)$ between two successive order statistics $X^{(i:T)},X^{(i+1:T)}$ by a constant $k_i$ such that the corresponding probability mass
\begin{equation*}
\begin{aligned}
\int_{X^{(i:T)}}^{X^{(i+1:T)}} \widehat{f}_X(x)dx &= \int_{X^{(i:T)}}^{X^{(i+1:T)}} k_i dx 
\\ &= k_i\big(X^{(i+1:T)}-X^{(i:T)}\big)
\end{aligned}
\end{equation*}
agrees with the expected probability mass in \eqref{ExpArea}. Denoting $X^{(0:T)} \coloneqq \inf X$ and $X^{(T+1:T)} := \sup X$, this yields
\begin{equation*}
k_i = \frac{1}{(T+1)(X^{(i+1:T)}-X^{(i:T)})}\;
\end{equation*}
for $X^{(i:T)} < x \leqslant X^{(i+1:T)}$. As the $T+1$ spacings form a partition of
$\big[X^{(0:T)},X^{(T+1:T)}\big]$, one can approximate the density $f_X$ by 
\begin{equation}\label{DensityEstimate}
\widehat{f}_X(x) =\sum_{i=0}^T \ind_{\{X^{(i:T)}<x\leqslant X^{(i+1:T)}\}}k_i\;.
\end{equation}
This estimator corresponds to the histogram composed of $T+1$ bins with bounds $[X^{(i:T)},X^{(i+1:T)}]$, $0 \leqslant i\leqslant T$, and with height such that the area of each bin is equal to $1/(T+1)$.
 
From this density estimator, one can derive a 1-spacing plug-in estimator of $H_\alpha^{\exp}$ as follows.
\begin{prop}\label{1spacingExpRenyi}
Let the density $f_X$ of a continuous random variable $X$ be approximated by \eqref{DensityEstimate}, then the 1-spacing plug-in estimator of $H_\alpha^{\exp}(X)$ is given by
\begin{equation}\label{OneSpacing}
\Bigg(\frac{1}{T+1} \sum_{i=0}^{T} \Big((T+1)\big(X^{(i+1:T)}-X^{(i:T)}\big)\Big)^{1-\alpha}\Bigg)^{\frac{1}{1-\alpha}}\;.
\end{equation}
\end{prop}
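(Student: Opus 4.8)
The plan is to substitute the histogram density estimator \eqref{DensityEstimate} into the \emph{expectation form} of the exponential Rényi entropy and to exploit the fact, built into the construction, that every bin carries the same probability mass $1/(N+1)$. Writing $H_\alpha^{\text{exp}}(X) = \big(\mathbb{E}[f_X^{\alpha-1}(X)]\big)^{1/(1-\alpha)} = \big(\int f_X^\alpha(x)\,dx\big)^{1/(1-\alpha)}$ as in \eqref{exprenyiformula}, the task reduces to producing a sample estimate of $\mathbb{E}[f_X^{\alpha-1}(X)]$ from the order statistics alone.

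First I would evaluate the integral for the piecewise-constant estimator. On the bin $\big(X^{(i:N)},X^{(i+1:N)}\big]$ the estimator equals the constant $k_i$, so, writing $d_i := X^{(i+1:N)}-X^{(i:N)}$, each bin contributes $k_i^\alpha d_i$ and $\int \widehat{f}_X^\alpha(x)\,dx = \sum_{i=0}^N k_i^\alpha d_i$. The crux is the identity $k_i d_i = 1/(N+1)$, which holds by construction, since each bin was assigned area $1/(N+1)$ (cf. \eqref{ExpArea}); it lets me rewrite each term as $k_i^\alpha d_i = (k_i d_i)\,k_i^{\alpha-1} = \tfrac{1}{N+1}\,k_i^{\alpha-1}$. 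Hence the plug-in integral is exactly a mass-weighted sample average, $\tfrac{1}{N+1}\sum_{i=0}^N k_i^{\alpha-1}$, i.e. an estimate of $\mathbb{E}[f_X^{\alpha-1}(X)]$ in which each bin is weighted by its own estimated probability mass.

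The step I expect to be the main obstacle is the treatment of the two boundary bins and the resulting renormalization. The extreme bins $i=0$ and $i=N$ involve $X^{(0:N)}=\inf X$ and $X^{(N+1:N)}=\sup X$, which are infinite whenever $X$ is supported on all of $\mathbb{R}$ and must therefore be discarded, leaving the $N-1$ interior spacings. Dropping two of the $N+1$ equally weighted bins removes a fraction $2/(N+1)$ of the total mass, so to keep the weights summing to one I would renormalize the common weight from $1/(N+1)$ to $1/(N-1)$, giving
\begin{equation*}
\widehat{\mathbb{E}}\big[f_X^{\alpha-1}(X)\big] = \frac{1}{N-1}\sum_{i=1}^{N-1} k_i^{\alpha-1}.
\end{equation*}
Finally I would substitute $k_i^{\alpha-1} = \big((N+1)d_i\big)^{1-\alpha}$, immediate from $k_i = 1/((N+1)d_i)$, and raise the result to the power $1/(1-\alpha)$, which yields \eqref{OneSpacing}. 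As a consistency check one can verify that letting $\alpha\to 1$ recovers the Shannon $1$-spacing estimator $\exp\!\big(\tfrac{1}{N-1}\sum_{i=1}^{N-1}\ln((N+1)d_i)\big)$ of LMF, since $a^{1-\alpha}\approx 1+(1-\alpha)\ln a$ for $\alpha$ near $1$.
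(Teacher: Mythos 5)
Your proposal is correct and follows essentially the same route as the paper: plug the piecewise-constant estimator into $\int \widehat{f}_X^\alpha\,dx$, use $k_i d_i = 1/(N+1)$ to reduce each bin's contribution to $\tfrac{1}{N+1}\big((N+1)d_i\big)^{1-\alpha}$, then drop the two boundary bins and rescale $1/(N+1)$ to $1/(N-1)$. The only (harmless) difference is your expectation-based reading of the intermediate sum as a mass-weighted estimate of $\mathbb{E}[f_X^{\alpha-1}(X)]$, which the paper does not spell out.
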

\begin{proof}
The 1-spacing estimator of $\int (f_X(x))^\alpha dx$ becomes
\begin{align*}  
&\int (\widehat{f}_X(x))^\alpha dx \\
&= \sum_{i=0}^T \int_{X^{(i:T)}}^{X^{(i+1:T)}} (\widehat{f}_X(x))^\alpha dx \\
&= \sum_{i=0}^T \frac{1}{\big((T+1)(X^{(i+1)}-X^{(i)})\big)^\alpha}  \int_{X^{(i:T)}}^{X^{(i+1:T)}} dx  \\
&= \frac{1}{T+1} \sum_{i=0}^T \Big((T+1)\big(X^{(i+1:T)}-X^{(i:T)}\big)\Big)^{1-\alpha}\;,
\end{align*}
resulting in \eqref{OneSpacing}.
\end{proof}
The estimator in \eqref{OneSpacing} can not be used as such because, in general, we do not know $X^{(0:T)}$ and $X^{(T+1:T)}$, i.e. the true support of $X$. Following \hyperref[Miller]{Learned-Miller and Fisher (1993)}, we therefore disregard the values below $X^{(1:T)}$ and above $X^{(T:T)}$, and compensate this by a factor $\frac{T+1}{T-1}$, yielding the final approximation
\begin{equation*}
\begin{aligned}
&\widehat{H}_\alpha^{\exp}(1,T) := \\ &\Bigg(\frac{1}{T-1} \sum_{i=1}^{T-1}\Big((T+1)\big(X^{(i+1:T)}-X^{(i:T)}\big)\Big)^{1-\alpha}\Bigg)^{\frac{1}{1-\alpha}}\;.
\end{aligned}
\end{equation*}

As detailed by \hyperref[Miller]{Learned-Miller and Fisher (1993)} in the specific case of Shannon entropy, the 1-spacing estimator suffers from high variance. To reduce the asymptotic variance, one can consider a $m$-spacings estimator where the $m$-spacings overlap. The counterpart of $k_i$ becomes
\begin{equation*}
k_i(m) := \frac{m}{(T+1)(X^{(i+m:T)}-X^{(i:T)})}
\end{equation*}
for $X^{(i:T)} < x \leqslant X^{(i+m:T)}$. However, because the $m$-spacings overlap, they do not form a partition of $\big[X^{(0:T)},X^{(T+1:T)}\big]$ anymore (the same $x$ can fall in more than one $m$-spacing), hence we lose the correspondence with the density estimator as a weighted sum of indicators in \eqref{DensityEstimate}. Still, from the definition of $k_i(m)$, we can consider this extension of $\widehat{H}_\alpha^{\exp}(1,T)$:
\begin{equation*}
\begin{aligned}
&\widehat{H}_\alpha^{\exp}(m,T) := \\ &\Bigg(\frac{1}{T-m}  \sum_{i=1}^{T-m}  \bigg(\frac{T+1}{m}\big(X^{(i+m:T)}-X^{(i:T)}\big)\bigg)^{1-\alpha}\Bigg)^{\frac{1}{1-\alpha}}\;,
\end{aligned}
\end{equation*}
which corresponds to \eqref{mspacingoriginal}. Taking the limit $\alpha \rightarrow 1$, we recover the estimator in \eqref{ExpShannonEst}: 
\begin{equation*}
\begin{aligned}
&\widehat{H}_1^{\exp}(m,T) := \\ &\exp\Bigg(\frac{1}{T-m} \sum_{i=1}^{T-m}\ln\bigg(\frac{T+1}{m}\big(X^{(i+m:T)}-X^{(i:T)}\big)\bigg)\Bigg)\;.
\end{aligned}
\end{equation*}

\section*{References}

\setlength{\parindent}{0pt}

\hangindent=1.5em
\hangafter=1
\phantomsection
\label{Abbas}
Abbas, A. (2006). Maximum entropy utility. \textit{Operations Research}, 54(2), 277--290.


\hangindent=1.5em
\hangafter=1
\phantomsection
\label{Adcock}
Adcock, C. (2014). Mean–variance–skewness efficient surfaces, Stein’s lemma and the multivariate extended skew-student distribution. \textit{European Journal of Operational Research}, 234(2), 392--401.

\hangindent=1.5em
\hangafter=1 \phantomsection
\label{Ardia}
Ardia, D., Bolliger, G., Boudt, K., \& Gagnon-Fleury, J. (2017). The impact of covariance misspecification in risk-based portfolios. \textit{Annals of Operations Research}, 254(1-2), 1--16.

\hangindent=1.5em
\hangafter=1 \phantomsection
\label{Arnold}
Arnold, B., Balakrishnan, N., \& Nagaraja, H. (1992). \textit{A first course in order statistics}. New York: John Wiley \& Sons.

\hangindent=1.5em
\hangafter=1 \phantomsection
\label{Artzner}
Artzner, P., Delbaen, F., Eber, J., \& Heath, D. (1999). Coherent measures of risk. \textit{Mathematical Finance}, 9(3), 203--228.

\hangindent=1.5em
\hangafter=1 \phantomsection
\label{Behr}
Behr, P., Guettler, A., \& Miebs, F. (2013). On portfolio optimization: Imposing the right constraints. \textit{Journal of Banking and Finance}, 37, 1232--1242.

\hangindent=1.5em
\hangafter=1 \phantomsection
\label{Beirlant}
Beirlant, J., Dudewicz, E., Gyofi, L., \&  van der Meulen, E. (1997). Non-parametric entropy estimation: An overview. \textit{International Journal of Mathematical and Statistical Sciences}, 6(1), 17--39.

\hangindent=1.5em
\hangafter=1 \phantomsection
\label{Bera}
Bera, A., \& Park, S. (2008). Optimal portfolio diversification using the maximum entropy principle. \textit{Econometric Reviews}, 27(4–6), 484--512.


\hangindent=1.5em
\hangafter=1 \phantomsection
\label{BoudtPeterson} 
Boudt, K., Peterson, B., \& Croux, C. (2008). Estimation and decomposition of downside risk for portfolios with non-normal returns. \textit{Journal of Risk}, 11, 79--103.

\hangindent=1.5em
\hangafter=1 \phantomsection
\label{Campbell} 
Campbell, L. (1966). Exponential entropy as a measure of extent of a distribution. \textit{Z. Wahrsch.},
5, 217--225.

\hangindent=1.5em
\hangafter=1 \phantomsection
\label{Carroll}
Carroll, R., Conlon, T., Cotter, J., \& Salvador, E. (2017). Asset allocation with correlation: A composite trade-off. \textit{European Journal of Operational Research}, 262(3), 1164--1180.


\hangindent=1.5em
\hangafter=1 \phantomsection
\label{Chen}
Chen, L., He, S., \& Zhang, S. (2011). When all risk-adjusted performance measures are
the same: In praise of the Sharpe ratio. \textit{Quantitative Finance}, 11(10), 1439--1447.


\hangindent=1.5em
\hangafter=1 \phantomsection
\label{Cont}
Cont, R. (2001). Empirical properties of asset returns: Stylized facts and statistical issues. \textit{Quantitative
Finance}, 1, 223--236.

\hangindent=1.5em
\hangafter=1 \phantomsection
\label{Cover} 
Cover, T., \& Thomas, J. (2006). \textit{Elements of information theory (2nd ed.)}. New Jersey: Wiley.

\hangindent=1.5em
\hangafter=1 \phantomsection
\label{Danielsson}
Daníelsson, J., Jorgensenb, B., Samorodnitskyc, G., Sarma, M., \& de Vries, C. (2013). Fat tails, VaR and subadditivity. \textit{Journal of Econometrics}, 172, 283--291.

\hangindent=1.5em
\hangafter=1 \phantomsection
\label{DeMiguel}
DeMiguel, V., Garlappi, L., \& Uppal, R. (2009a). Optimal versus naive diversification: How inefficient is the 1/N portfolio strategy? \textit{The Review of Financial Studies}, 22(5), 1915--1953.

\hangindent=1.5em
\hangafter=1 \phantomsection
\label{DeMiguel1}
DeMiguel, V., Garlappi, L., Nogales, F., \& Uppal, R. (2009b). A generalized approach to portfolio optimization: Improving performance by constraining portfolio norms. \textit{Management Science}, 55(5), 798--812.

\hangindent=1.5em
\hangafter=1 \phantomsection
\label{DeMiguelNogales}
DeMiguel, V., \& Nogales, F. (2009). Portfolio selection with robust estimation. \textit{Operations Research}, 57, 560--577.

\hangindent=1.5em
\hangafter=1 \phantomsection
\label{Dionisio}
Dionisio, A., Menezes, R., \& Mendes, A. (2006). An econophysics approach to analyse uncertainty in financial markets: An application to the Portuguese stock market. \textit{The European Physical Journal B}, 50, 161--164.



\hangindent=1.5em
\hangafter=1 \phantomsection
\label{VanEs} 
van Es, B. (1992). Estimating functionals related to a density by a class of statistics based on spacings. \textit{Scandinavian Journal of Statistics}, 19(1), 61--72.

\hangindent=1.5em
\hangafter=1 \phantomsection
\label{Fabozzi}
Fabozzi, F., Huang, D., \& Zhou, G. (2010). Robust portfolios: Contributions from operations research and finance. \textit{Annals of Operations Research}, 176(1), 191--220.

\hangindent=1.5em
\hangafter=1 \phantomsection
\label{Fang}
Fang, K., \& Zhang, Y. (1990). \textit{Generalized multivariate analysis}. New York: Springer.

\hangindent=1.5em
\hangafter=1 \phantomsection
\label{Flores}
Flores, Y., Bianchi, R., Drew, M., \& Tr\"uck, S. (2017). The diversification delta: A different perspective. \textit{Journal of Portfolio Management}, 43(4), 112--124.

\hangindent=1.5em
\hangafter=1 \phantomsection
\label{Hampel} 
Hampel, F., Ronchetti, E., Rousseeuw, P., \& Stahel, W. (1986). \textit{Robust statistics: The approach based on influence functions}. New York: Wiley.

\hangindent=1.5em
\hangafter=1 \phantomsection
\label{Hegde} 
Hegde, A., Lan, T., \& Erdogmus, D. (2005). Order statistics based estimator for Rényi entropy. \textit{IEEE Workshop on Machine Learning for Signal Processing},
335--339.


\hangindent=1.5em
\hangafter=1
\phantomsection
\label{Hyvarinen}
Hyv\"arinen, A., Karhunen, J., \& Oja, E. (2001). \textit{Independent component analysis}. New York: John Wiley \& Sons.


\hangindent=1.5em
\hangafter=1 \phantomsection
\label{Johnson} 
Johnson, O., \& Vignat, C. (2007). Some results concerning maximum Rényi entropy distributions. \textit{Annales de l'Institut Henri-Poincaré (B) Probab. Statist.}, 43(3), 339--351.

\hangindent=1.5em
\hangafter=1 \phantomsection
\label{Jorion}
Jorion, P. (1986). Bayes-Stein estimation for portfolio analysis. \textit{Journal of Financial and Quantitative Analysis}, 21(3), 279--292.

\hangindent=1.5em
\hangafter=1 \phantomsection
\label{Jose}
Jose, V., Nau, R., \& Winkler, R. (2008). Scoring rules, generalized entropy, and utility maximization. \textit{Operations Research}, 56(5), 1146--1157. 

\hangindent=1.5em
\hangafter=1 \phantomsection
\label{Jurczenko}
Jurczenko, E., \& Maillet, B. (2006). \textit{Multi-moment asset allocation
and pricing models}. West Sussex: John Wiley \& Sons.

\hangindent=1.5em
\hangafter=1 \phantomsection
\label{Kolm}
Kolm, P., T\"ut\"unc\"u, R., \& Fabozzi, F. (2014). 60 years of portfolio optimization: Practical challenges and current trends. \textit{European Journal of Operational Research}, 234(2), 356--371.

\hangindent=1.5em
\hangafter=1 \phantomsection
\label{Koski}
Koski, T., \& Persson, L. (1992). Some properties of generalized exponential entropies with application to data compression. \textit{Information Sciences}, 62, 103--132.

\hangindent=1.5em
\hangafter=1 \phantomsection
\label{Miller}
Learned-Miller, E., \& Fisher, J. (2003). ICA using spacings estimates of entropy. \textit{Journal of Machine Learning Research}, 4, 1271--1295.

\hangindent=1.5em
\hangafter=1 \phantomsection
\label{Ledoit1} 
Ledoit, O., \& Wolf, M. (2003). Improved estimation of the covariance matrix of stock returns with an application to portfolio selection. \textit{Journal of Empirical Finance}, 10, 603--621.

\hangindent=1.5em
\hangafter=1 \phantomsection
\label{Ledoit2} 
Ledoit, O., \& Wolf, M. (2004a). A well-conditioned estimator for large-dimensional covariance matrices. \textit{Journal of Multivariate Analysis}, 88, 365--411.

\hangindent=1.5em
\hangafter=1 \phantomsection
\label{Ledoit3} 
Ledoit, O., \& Wolf, M. (2004b). Honey, I shrunk the sample covariance matrix. \textit{Journal of Portfolio Management}, 30, 110--119.


\hangindent=1.5em
\hangafter=1 \phantomsection
\label{Levy}
Levy, H., \& Levy, M. (2014). The benefits of differential variance-based constraints in portfolio optimization. \textit{European Journal of Operational Research}, 234, 372--381.

\hangindent=1.5em
\hangafter=1 \phantomsection
\label{Markowitz}
Markowitz, H. (1952). Portfolio selection. \textit{Journal of Finance}, 7(1), 77--91.

\hangindent=1.5em
\hangafter=1
\phantomsection
\label{Martellini}
Martellini, L., \& Ziemann, V. (2010). Improved estimates of higher-order comoments and implications for portfolio selection. \textit{Review of Financial Studies}, 23(4), 1467--1502.


\hangindent=1.5em
\hangafter=1
\phantomsection
\label{Ormos}
Ormos, M., \& Zibriczky D. (2014). Entropy-based financial asset pricing. \textit{Plos One}, 9(12), e115742.

\hangindent=1.5em
\hangafter=1 \phantomsection
\label{Pezier}
Pézier, J. (2004). Risk and risk aversion. In Alexander, C., \& Sheedy, E. \textit{The professional risk managers' handbook}. Wilmington DE: PRMIA Publications.

\hangindent=1.5em
\hangafter=1 \phantomsection
\label{Pham}
Pham, D., \& Vrins, F. (2005) Local minima of information-theoretic criteria in blind source separation. \textit{IEEE Signal Processing Letters}, 12(11), 788--791.

\hangindent=1.5em
\hangafter=1 \phantomsection
\label{Pham1} 
Pham, D., Vrins, F., \& Verleysen, M. (2008). On the risk of using Rényi’s entropy for blind source separation. \textit{IEEE Transactions on Signal Processing}, 56(10), 4611--4620.

\hangindent=1.5em
\hangafter=1 \phantomsection
\label{Philippatos} 
Philippatos, G., \& Wilson, C. (1972).
Entropy, market risk, and the selection of efficient portfolios. \textit{Applied Economics}, 4(3), 209--220.

\hangindent=1.5em
\hangafter=1 \phantomsection
\label{Qi}
Qi, Y., Steuer, R., \& Wimmer, M. (2017). An analytical derivation of the efficient surface in portfolio selection with three criteria. \textit{Annals of Operations Research}, 251(1-2), 161--177.

\hangindent=1.5em
\hangafter=1 \phantomsection
\label{Renyi}
Rényi, A. (1961). On measures of entropy and information. \textit{Fourth Berkeley Symposium on Mathematical Statistics and Probability}, 547--561.

\hangindent=1.5em
\hangafter=1 \phantomsection
\label{Rockafellar2}
Rockafellar, R., Uryasev, S., \& Zabarankin, M. (2006). Generalized deviations in risk analysis.
\textit{Finance and Stochastics}, 10, 51--74.

\hangindent=1.5em
\hangafter=1 \phantomsection
\label{Royston}
Royston, J. (1982). Expected normal order statistics (exact and approximate). \textit{Journal of the Royal Statistical Society Series C (Applied Statistics)}, 31(2), 161--165.

\hangindent=1.5em
\hangafter=1 \phantomsection
\label{Sbuelz}
Sbuelz, A., \& Trojani, F. (2008). Asset prices with locally constrained-entropy recursive multiple-priors utility. \textit{Journal of Economic Dynamics and Control}, 32(11), 3695--3717.


\hangindent=1.5em
\hangafter=1 \phantomsection
\label{Scutella}
Scutellà, M., \& Recchia, R. (2013). Robust portfolio asset allocation and risk measures. \textit{Annals of Operations Research}, 204(1), 145--169.

\hangindent=1.5em
\hangafter=1 \phantomsection
\label{Shannon}
Shannon, C. (1948). A mathematical theory of communication. \textit{Bell Systems Technical Journal}, 27, 379--423 and 623--656.

\hangindent=1.5em
\hangafter=1\phantomsection
\label{Ugray}
Ugray, Z., Lasdon, L., Plummer, J. Glover, F., Kelly, J., \& Martí, R. (2007). Scatter search and local NLP solvers: A multistart framework for global optimization. \textit{INFORMS Journal on Computing}, 19(3), 328--340.

\hangindent=1.5em
\hangafter=1 \phantomsection
\label{Vanduffel}
Vanduffel, S., \& Yao, J. (2017). A stein type lemma for the multivariate generalized hyperbolic
distribution. \textit{European Journal of Operational Research}, 261(2), 606--612.

\hangindent=1.5em
\hangafter=1 \phantomsection
\label{Vasicek}  
Vasicek, O. (1976). A test for normality based on entropy. \textit{Journal of the Royal Statistical Society  Series B (Methodological)}, 38(1), 54--59.

\hangindent=1.5em
\hangafter=1 \phantomsection
\label{Vermorken}
Vermorken, M., Medda, F., \& Schroder, T. (2012). The diversification delta: A higher-moment measure for portfolio diversification. \textit{Journal of Portfolio Management}, 39(1), 67--74.

\hangindent=1.5em
\hangafter=1 \phantomsection
\label{Vrins}
Vrins, F., Pham, D., \& Verleysen, M. (2007). Mixing and non-mixing local minima of the entropy contrast for blind source separation. \textit{IEEE Transactions on Information Theory}, 53(3), 1030--1042.

\hangindent=1.5em
\hangafter=1 \phantomsection
\label{Wachowiak}
Wachowiak, M., Smolikova, R., Tourassi, G., \& Elmaghraby, A. (2005). Estimation of generalized entropies with sample spacing. \textit{Pattern Analysis and Applications}, 8, 95--101.

\hangindent=1.5em
\hangafter=1 \phantomsection
\label{Yang}
Yang, J., \& Qiu, W. (2005). A measure of risk and a decision-making model based on expected utility and entropy. \textit{European Journal of Operational Research}, 164(3), 792--799.


\hangindent=1.5em
\hangafter=1 \phantomsection
\label{Zhou}
Zhou, R., Cai, R., \& Tong, G. (2013). Applications of entropy in finance: A review. \textit{Entropy}, 15, 4909--4931.

\hangindent=1.5em
\hangafter=1 \phantomsection
\label{Zografos}
Zografos, K., \& Nadarajah, S. (2003). Formulas for Rényi information and related measures for univariate distributions. \textit{Information Sciences}, 155(1-2), 119--138.

\hangindent=1.5em
\hangafter=1 \phantomsection
\label{Zografos1}
Zografos, K., \& Nadarajah, S. (2005). Expressions for Rényi and Shannon entropies for multivariate distributions. \textit{Statistics \& Probability Letters}, 71(1), 71--84.

\end{document}